\title{Construction of optimal locally recoverable codes and connection with hypergraph }
\author{Chaoping Xing\thanks{School of Physical and Mathematical Sciences, Nanyang Technological University, Singapore.  This research is supported  by the National Research Foundation, Prime Minister's Office, Singapore under its Strategic Capability Research Centres Funding Initiative; and the Singapore MoE Tier 1 grants RG25/16 and RG21/18. Email: {\tt xingcp@ntu.edu.sg}.}
 \and Chen Yuan\thanks{Centrum Wiskunde \& Informatica, Amsterdam, Netherlands. Most of this work was done when the author was visiting the School of Physical and Mathematical Science, Nanyang Technological University, Singapore. Research supported in part by ERC H2020 grant No.74079 (ALGSTRONGCRYPTO). Email: {\tt Chen.Yuan@cwi.nl}}}
\date{}
\newtheorem{lemma}{Lemma}[section]
\newtheorem{theorem}[lemma]{Theorem}
\newtheorem{cor}[lemma]{Corollary}
\newtheorem{prop}[lemma]{Proposition}
\newtheorem{defn}{Definition}
\theoremstyle{remark}
\newtheorem{rmk}{Remark}
\newtheorem{conj}[lemma]{Conjecture}
\renewcommand{\epsilon}{\varepsilon}
\renewcommand{\le}{\leqslant}
\renewcommand{\ge}{\geqslant}
\newcommand{\vnote}[1]{}
\def\F{\mathbb{F}}
\def \mA {\mathcal{A}}
\def \mA {\mathcal{A}}
\def \mB {\mathcal{B}}
\def \mF {\mathcal{F}}
\def \mL {\mathcal{L}}
\def \Xi {{X^{[i]}}}
\newcommand{\Ga}{\alpha}
\newcommand{\Gb}{\beta}
\def \ba {{\bf a}}
\def \bc {{\bf c}}
\def \bb {{\bf b}}
\def \bi {{\bf 1}}
\def \bl {{\bf \ell}}
\def\be {{\bf e}}
\def\bh {{\bf h}}
\def \bu {{\bf u}}
\def \bv {{\bf v}}
\def \bo {{\bf 0}}
\begin{document}

\maketitle
\begin{abstract}
Recently, it was discovered by several authors that a $q$-ary optimal locally recoverable code, i.e., a locally recoverable code archiving the Singleton-type bound, can have length much bigger than $q+1$. This is quite different from the classical $q$-ary MDS codes where it is conjectured that the code length is upper bounded by $q+1$ (or $q+2$ for some special case). This discovery inspired some recent studies on length of an optimal locally recoverable code. It was  shown in \cite{LXY} that a $q$-ary optimal locally recoverable code is unbounded for $d=3,4$. Soon after, it was proved in \cite{GXY} that  a $q$-ary optimal locally recoverable code with distance $d$ and locality $r$ can have length $\Omega_{d,r}(q^{1 + 1/\lfloor(d-3)/2\rfloor})$. Recently, an explicit construction of $q$-ary optimal locally recoverable codes for distance $d=5,6$ was given in \cite{J18}.

\smallskip
In this paper, we further investigate construction of optimal locally recoverable codes along the line of using parity-check matrices. Inspired by classical Reed-Solomon codes and \cite{J18}, we equip parity-check matrices with the Vandermond  structure. It is turns out that  a parity-check matrix with the Vandermond  structure produces an optimal locally recoverable code must obey  certain disjoint property for subsets of $\F_q$. To our surprise,
this disjoint condition is equivalent to a well-studied problem in extremal graph theory. With the help of extremal graph theory, we succeed to improve all of the best known results in \cite{GXY} for $d\geq 7$. In addition, for $d=6$, we are able to remove the constraint required in \cite{J18} that $q$ is even.
\end{abstract}

\section{Introduction}
Motivated by applications in distributed and cloud storage systems, locally recoverable codes have been studied extensively in recent years.
Informally speaking,  a locally recoverable code (LRC for short) is a block code with an additional property called { locality}.
For a locally recoverable code $C$ of length $n$, dimension $k$ and locality $r$, it was shown in \cite{GHSY12} that the minimum distance $d(C)$ of $C$ is upper bounded by
 \begin{equation}\label{eq:x1}
 d(C)\le n-k-\left\lceil \frac kr\right\rceil+2.
 \end{equation}
The bound \eqref{eq:x1} is called the Singleton-type bound for locally recoverable codes.
A code achieving the above bound is usually called optimal.

\subsection{Known results}
Construction of optimal  locally recoverable codes, i.e., block codes archiving the bound \eqref{eq:x1} is of both theoretical interest and practical importance. This is a challenging task and has attracted great attention in the last few years. In literature, there are a few constructions available and some classes of  optimal  locally recoverable codes are known. A class of codes constructed earlier and known as pyramid codes \cite{HCL07} are shown to be codes that
are optimal.  In \cite{SRKV13},  Silberstein {\it et al}  proposed a two-level construction based on the  Gabidulin codes combined with a single parity-check $(r+1,r)$ code. Another construction \cite{TPD16} used two layers of MDS codes, a Reed-Solomon code and a special $(r+1,r)$ MDS code. A common shortcoming of these constructions relates to the size of the code alphabet which in all the papers is an exponential function of the code length, complicating the implementation. There  was an earlier construction of optimal locally recoverable codes given in \cite{PKLK12} with  alphabet  size comparable to code length. However, the construction in \cite{PKLK12} only produces  a specific value of the length $n$, i.e., $n=\left\lceil \frac kr\right\rceil(r+1)$. Thus, the rate of the code is very close to $1$. There are also some existence results given in \cite{PKLK12} and \cite{TB14} with less restriction on locality $r$. But both results require large alphabet which is an exponential function of the code length.

A recent breakthrough construction was given  in \cite{TB14}. This construction naturally generalizes Reed-Solomon construction which relies on the alphabet of cardinality comparable to the code length $n$. The idea behind the construction is very nice. The only shortcoming of this construction is restriction on locality $r$. Namely,  $r+1$ must be a divisor of either $q-1$ or $q$, or $r+1$ is equal to a product of a divisor of $q-1$ and a divisor of $q$ for certain $q$, where $q$ is the code alphabet. This construction was extended via automorphism group of rational function fields by Jin, Ma and Xing \cite{JMX17} and it turns out that there are more flexibility on locality and the code length can be $q+1$.
For particular locality such as $r=2,3,5,7,11$ or $23$, it was shown that there exist $q$-ary optimal locally recoverable codes with length up to $q+2\sqrt{q}$ via elliptic curves \cite{LMX18}. All these results are aimed at the optimal LRC with large distance.

Unlike classical  MDS codes, it is surprising to discover that the optimal LRCs can have super-linear code length in alphabet size $q$. Barg et.al, \cite{BHHMV} gave optimal LRCs by using algebraic surfaces of length $n \approx q^2$ when the distance $d=3$ and $r \le 4$. This inspired the construction of the optimal LRC with unbounded length and distance $d=3,4$ \cite{LXY}. Furthermore, it was shown in \cite{GXY} that an optimal LRC with $d\geq 5$ must have  length upper bounded in terms of alphabet size $q$. More precisely, they showed that the length of an optimal $q$-ary linear LRC with distance $d\ge 5$ and locality $r$ is upper bonded by  $O\left(dq^{3+\frac{4}{d-4}}\right)$. As for the lower bound, they presented an explicit construction of optimal LRCs with code length $\Omega_r\left(q^{1+\frac{1}{\lfloor (d-3)/2 \rfloor}}\right)$ provided that $d\leq r+2$, where $\Omega_r$ means that the implied constant depends on $r$. One can see that there is still huge gap between the lower bound and the upper bound. Following this discovery,
there are several works dedicated to constructing the maximum length of optimal LRCs. The paper \cite{J18} aimed at the optimal LRC with small distance $d=5$ or $6$. In particular, for $d=6$, the results given in \cite{J18} are obtained subject to the constraint that $q$ is even.

\subsection{Our results, comparison and a conjecture}
The main result of this paper can be summarized as follows.
\begin{theorem}\label{thm:main}
Suppose that $r\ge d-2$ and $(r+1)|n$. Then
\begin{enumerate}
\item[{\rm (i)}]  there exists an explicit construction of optimal locally recoverable codes with  length $n=q^{2-o(1)}$, minimum distance $d$ and locality $r$ for $d=7,8$;
\item[{\rm (ii)}] there exists an  explicit construction of optimal locally recoverable codes with  length $n=q^{\frac{3}{2}-o(1)}$, minimum distance $d$ and locality $r$ for $d=9,10$;
\item[{\rm (iii)}] there exist optimal locally recoverable codes with length $n=\Omega_{r,d}\left(q(q\log q)^{\frac{1}{\lfloor (d-3)/2 \rfloor}}\right)$, minimum distance $d$ and locality $r$ for $d\geq 11$; and
\item[{\rm (iv)}] there exists an   explicit construction of optimal locally recoverable code with  length $n=\Omega_{r,d}\left(q^{1+\frac{1}{\lfloor (d-3)/2 \rfloor}}\right)$, minimum distance $d$ and locality $r$ for a constant $d\geq 11$.
\end{enumerate}
\end{theorem}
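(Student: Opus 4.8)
The plan is to reduce the construction of optimal LRCs with the required parameters to a combinatorial/extremal-graph condition on subsets of $\F_q$, and then invoke known results from extremal graph theory. First I would set up the parity-check matrix with the Vandermonde structure announced in the abstract: partition the $n$ coordinates into $n/(r+1)$ blocks of size $r+1$, impose one local parity check per block (this is what forces locality $r$ and accounts for the $\lceil k/r\rceil$ term), and then add $d-2$ global rows whose structure, within each block, is a Vandermonde-type pattern built from a chosen subset $A_i\subseteq\F_q$ of size $r+1$ assigned to the $i$-th block together with a ``scaling'' element. The code is optimal (meets \eqref{eq:x1}) precisely when every $d-1$ columns of the parity-check matrix are linearly independent; unwinding the Vandermonde structure, I expect this to translate into a \emph{disjointness / no-common-solution} condition among the blocks — roughly, that one cannot pick a bounded number of blocks and a small number of coordinates from each so that the corresponding Vandermonde sub-determinant vanishes. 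For $d$ small this becomes: the pairwise (or $t$-wise, with $t=\lfloor(d-3)/2\rfloor$ or so) ``sum-sets'' or ``difference configurations'' of the $A_i$'s avoid a forbidden pattern.

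Next I would phrase this forbidden-pattern condition as a hypergraph: vertices are the elements of $\F_q$ (or of $\F_q^*$, or of a slightly larger structured set), and for each block we get an edge (or a constraint) of size $r+1$; the optimality requirement says this hypergraph must be \emph{linear} / have \emph{large girth} — no two edges share more than one vertex, and more generally no short cycle of a certain length depending on $d$. The largest number of blocks we can fit is then exactly the maximum number of edges in such a girth-constrained hypergraph on $\sim q$ vertices. This is where the four cases split by the value of $d$:

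For $d=7,8$ the relevant object is essentially a \emph{linear hypergraph} / a graph of girth $\geq 5$ or a generalized quadrangle-type incidence, whose edge count is $q^{2-o(1)}$ (incidence bounds / the $C_4$-free or $C_6$-free extremal numbers, à la Brown, Erd\H{o}s–R\'enyi–S\'os, Wenger), giving part~(i) explicitly. For $d=9,10$ we need one more forbidden cycle length, pushing us to $C_8$-free-type bounds, which give $q^{3/2-o(1)}$ edges — this yields (ii). For general $d$, the exponent is governed by the Bondy–Simonovits even-cycle bound $\mathrm{ex}(N;C_{2t})=O(N^{1+1/t})$, and the $\log$ factor in (iii) comes from the probabilistic (deletion-method / random greedy) lower bound for hypergraphs of fixed girth — this is non-explicit, hence the $\Omega_{r,d}(\,\cdot\,)$. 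Finally, (iv) replaces the probabilistic bound by an explicit algebraic construction of a high-girth hypergraph (Lazebnik–Ustimenko-type incidence structures, or explicit generalized polygons when $d$ lands on the right value), which loses the $\log q$ but is fully constructive, matching the $\Omega_{r,d}(q^{1+1/\lfloor(d-3)/2\rfloor})$ of \cite{GXY} while now being explicit for all constant $d\ge 11$.

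The main obstacle I expect is the first reduction: showing \emph{cleanly and tightly} that ``Vandermonde parity-check matrix yields an optimal LRC'' is \emph{equivalent} (not merely implied by) the clean hypergraph-girth condition. One direction — girth condition $\Rightarrow$ every $d-1$ columns independent — should be a determinant computation exploiting that sub-Vandermonde matrices are nonsingular when the nodes are distinct; the subtle point is correctly bookkeeping which coincidences among block-labels and within-block coordinates can make a $(d-1)\times(d-1)$ minor drop rank, and verifying that \emph{exactly} those coincidences correspond to short cycles/overlapping edges in the hypergraph. I would isolate this as a lemma: ``the LRC from $(A_1,\dots,A_{n/(r+1)})$ is optimal with distance $d$ iff the associated hypergraph has no configuration of $\le \lfloor(d-3)/2\rfloor+?$ edges forming a dependency.'' Once that equivalence is pinned down, parts (i)–(iv) are each a matter of quoting the appropriate extremal/constructive bound for hypergraphs of the corresponding girth and translating edge counts back into code length $n=(r+1)\cdot(\#\text{edges})$.
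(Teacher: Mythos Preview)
Your plan matches the paper's architecture almost exactly: Vandermonde-structured parity-check matrix with one local check per block and $d-2$ global rows, reduction to a combinatorial condition on the block sets $A_i\subseteq\F_q$, then a hypergraph-girth reformulation and appeals to extremal graph theory for parts~(i)--(iii). Two specifics differ from what you guessed. First, the precise condition is not about ``sum-sets'' or ``difference configurations'' but simply the union-size inequality $|\bigcup_{i\in S}A_i|\ge r|S|+1$ for every $S$ with $|S|\le t=\lfloor(d-1)/2\rfloor$ (Theorem~\ref{thm:set}); the paper then shows this is equivalent to the $(r+1)$-uniform hypergraph with edge set $\{A_1,\dots,A_m\}$ having no Berge $\ell$-cycle for $\ell\le t$, which is the girth condition you anticipated. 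Second, for part~(iv) the paper does \emph{not} use Lazebnik--Ustimenko-type explicit algebraic high-girth constructions: it instead proves the existence bound $m=\Omega_{r,d}(q^{1+1/(t-1)})$ by a direct probabilistic deletion argument and then derandomizes via the method of conditional probabilities, which runs in polynomial time when $d$ is constant. Your proposed route through explicit algebraic graphs would be a genuine alternative, but transferring those \emph{graph} constructions to the $(r+1)$-uniform \emph{hypergraph} Berge-cycle-free setting is not automatic; the paper sidesteps this by derandomizing.
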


The first three results are derived from extremal graph theory (see Section 5). The last one is derived from the random arguments (see Section 4).

The first two results  improve on the result in \cite{GXY} which only achieves $n=\Omega(q^{3/2})$ for $d=7,8$ and $n=\Omega(q^{4/3})$ for $d=9,10$.
The third one outperforms the result in \cite{GXY} by a $(\log q)^{\frac{1}{\lfloor (d-3)/2 \rfloor}}$ multiplicative factor.
In addition, for $d=6$, we are able to remove the constraint required in \cite{J18} that $q$ is even.

Although it was proved in \cite{GXY} that the length of an optimal locally recoverable code is upper bounded by $q^{3+O\left(\frac1d\right)}$, both the constructions in \cite{GXY} and this paper show from different angles that the length of an optimal locally recoverable code only achieve $q^{1+O\left(\frac1d\right)}$. Furthermore, via an upper bound from extremal graph theory, our construction in this paper can achieve at most $O\left(q^{1+\frac{2}{\lfloor (d-1)/2 \rfloor}}\right)$ (see Section 5). Thus, we make the following conjecture.

\begin{conj}  Every optimal locally recoverable code with   minimum distance $d$ and locality $r$ has length upper bounded by $q^{1+O\left(\frac1d\right)}$.
\end{conj}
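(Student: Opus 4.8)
The plan is to reduce the conjecture to the extremal graph theory upper bound that the paper already establishes for Vandermonde‑structured parity‑check matrices in Section~5, and thereby to show that the constructions of Theorem~\ref{thm:main} are optimal in the exponent. The first step is to put an arbitrary optimal LRC with $r\ge d-2$ and $(r+1)\mid n$ into a canonical parity‑check form: partition the $n$ coordinates into the $m=n/(r+1)$ locality blocks, assume each block carries an $[r+1,r]$ MDS local parity, and, after deleting redundant rows, retain only the $O(d)$ essential ``global'' rows $g^{(1)},\dots,g^{(h)}$ over $\F_q$. In this normal form a nonzero codeword of weight $w<d$ must meet each block it touches in at least two coordinates, so it is supported on $s\le\lfloor(d-1)/2\rfloor$ blocks, and its existence is equivalent to a rank drop of the small matrix obtained by stacking the $h$ global rows and the $\le s$ local rows restricted to the $w\le d-1$ active coordinates. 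Thus the distance‑$d$ condition is a family of general‑position conditions on the block‑indexed column configuration $\{g^{(t)}_j\}$, and optimality --- equality in~\eqref{eq:x1} --- should in addition force the vanishing of all the corresponding near‑miss determinants coming from weights $d,d+1,\dots$ with prescribed multiplicities. This much is already the framework behind the $O\left(dq^{3+4/(d-4)}\right)$ bound of~\cite{GXY}.

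The heart of the argument --- and the step I expect to be the main obstacle --- is a rigidity statement: the full list of (non‑)degeneracy conditions forced by optimality should pin the column configuration down to the ``algebraic'' one, forcing the global rows to be Vandermonde up to a $\mathrm{PGL}$ change of coordinates, or at least forcing the very same disjointness conditions on subsets of $\F_q$ that the paper identifies for its own construction. Once that is available, the extremal graph theory upper bound of Section~5 --- a K\H{o}v\'ari--S\'os--Tur\'an / Bondy--Simonovits style count on the bipartite incidence between blocks and weight‑$d$ supports that forbids short even closed walks --- yields $n=O\left(q^{1+2/\lfloor(d-1)/2\rfloor}\right)=q^{1+O(1/d)}$, which is exactly the conjectured bound. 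The difficulty is genuine: a configuration of $m$ points in $\PP^{h-1}(\F_q)$ that merely satisfies the general‑position conditions can have $m\approx q^{h-1}=q^{\Theta(d)}$ members, so the improvement from $q^{3+O(1/d)}$ all the way down to $q^{1+O(1/d)}$ cannot come from position arguments alone; it has to exploit \emph{exact} extremality of the code, which is reminiscent of the MDS conjecture and is expected to be hard.

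A realistic way to make partial progress is to settle the conjecture first for small distances, say $d=7,8$ and then $d=9,10$, where the governing extremal graph parameters (Tur\'an numbers of short even cycles) are completely understood through incidence geometries such as polarity graphs of projective planes and generalized polygons; there the rigidity step reduces to classifying a bounded number of low‑degree relations among the global columns that are compatible with optimality, which seems within reach. An alternative route that tries to bypass rigidity is to sharpen the double count of~\cite{GXY} directly: rather than only using the existence of minimum‑weight codewords, track the \emph{number} of codewords of each small weight --- these are pinned down by optimality --- and feed this refined incidence data into the Bondy--Simonovits method. I suspect this still hits the same wall, since without the Vandermonde structure one cannot rule out the near‑extremal generic configurations, but it may at least improve the current $q^{3+O(1/d)}$ exponent and is a natural next step.
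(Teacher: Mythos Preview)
The statement you are attempting to prove is a \emph{conjecture}: the paper does not prove it and explicitly presents it as an open problem, motivated by the gap between the upper bound $q^{3+O(1/d)}$ from \cite{GXY} and the $q^{1+O(1/d)}$ lengths achieved by the constructions. There is therefore no ``paper's own proof'' to compare against.

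Your proposal is not a proof either, and you correctly identify where the real gap lies. The extremal graph theory upper bound in Section~5 (Proposition~\ref{prop:upper}) bounds only the specific family of codes built from the Vandermonde parity-check matrix~\eqref{eq:Vandermond}; it says nothing about an arbitrary optimal LRC. Your plan to bridge this gap via a ``rigidity statement'' --- that optimality forces the global parity rows to be Vandermonde up to $\mathrm{PGL}$ --- is the entire content of the conjecture, and you have supplied no argument for it beyond analogy with the MDS conjecture. As you yourself note, generic configurations in $\PP^{h-1}(\F_q)$ satisfying the required general-position constraints can have $q^{\Theta(d)}$ members, so there is no a priori reason the optimality equalities should collapse the configuration to the Vandermonde locus. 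In short: the approach is a reasonable research outline, but the load-bearing step is asserted rather than proved, and the conjecture remains open.
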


\subsection{Our techniques}
For minimum distance $d\geq 7$, the only optimal locally recoverable codes with super-linear code length was given in \cite{GXY}.
In this paper, we present another construction for optimal LRCs for $d\ge 5$. Our idea comes from generalized Reed-Solomon codes where parity-check matrices have the Vandermond  structure. This  idea was already employed  in \cite{J18} for $d=5,6$.
Like in \cite{J18}, we divide a parity-check matrix  into disjoint blocks, each block with $r+1$ columns.
We require that each block of this matrix has a Vandermond matrix structure.
In order that the parity-check matrix with this structure produces an optimal locally recoverable code, elements in these blocks must satisfy certain disjoint property. In turns out that
a necessary and sufficient condition for which a parity-check matrix with this structure produces an optimal locally recoverable code is obtained in terms of certain disjoint property for subsets of $\F_q$.
This condition allows us to relate  optimality of  a locally recoverable code to a well-studied problem in extremal graph theory. With the help of extremal graph theory, we succeed to improve all of the best known results in \cite{GXY} for $d\geq 7$.

Furthermore, by a random or probabilistic argument, we show an existence result. Moreover, for constant $d$ the probabilistic method for  the existence result can be converted into a deterministic algorithm via method of conditional probabilities. Thus, we obtain an algorithmic construction in polynomial time, i.e., Theorem \ref{thm:main}(iv). The result of Theorem \ref{thm:main}(iv) matches the result given in \cite{GXY}. However, our parity-check matrix is more structured and this may lead to some other applications.

\subsection{Organization}
The paper is organized as follows. In Section 2, we briefly introduce locally recoverable codes and some basic notations on graph theory. Section 3 presents a necessary and sufficient condition for which a Vandermond-type parity-check matrix  produces an optimal locally recoverable code in terms of certain disjoint property for subsets of $\F_q$. In Section 4, we first show an existence result via a probabilistic method. Then this probabilistic method is converted into  an algorithmic construction in polynomial time. Finally in Section 5, we show that the necessary and sufficient condition derived in Section 2 is equivalent to a central problem in extremal graph theory. By applying the known results from extremal graph theory, we obtain the desired results.

\section{Preliminaries}\label{sec:2}
\subsection{Locally recoverable codes}
Let $q$ be a prime power and $\F_q$ be the finite field with $q$ elements and denote by $[n]$ the set $\{1,2,\dots,n\}$.
In this paper, we consider linear locally recoverable codes only.  An $[n,k,d]$ linear code $C$ is a $k$-dimensional subspace of $\F_q^n$ with minimum (Hamming) distance $d$. The (Euclidean) dual code of $C$, denoted by $C^{\perp}$, is defined by $C^{\perp} = \{\bb\in\F_q^n:\; \bc\cdot\bb=0\ \mbox{for all}\ \bc\in C \}$, where $\bc\cdot\bb$ denotes the standard inner product of the two vectors $\bb$ and $\bc$.

Informally speaking, a block code is said with locality $r$ if  every coordinate of a given codeword can be recovered by accessing at most $r$ other coordinates of this codeword.
There are several equivalent definitions of locally recoverable codes.
A formal definition of a locally recoverable code with locality $r$ is given as follows.

\begin{defn}\label{def:2.1}{\rm A $q$-ary block code $C$ of length $n$ is called a locally recoverable code or locally repairable code (LRC for short) with locality $r$ if for any $i\in[n]$, there exists a subset ${R_i}\subseteq[n]\setminus\{i\}$ of size $r$ such that  for any $\bc=(c_1,\dots,c_n)\in C$, $c_i$ can be recovered by $\{c_j\}_{j\in {R_i}}$, i.e., for  any $i\in[n]$, there exists a subset ${R_i}\subseteq[n]\setminus\{i\}$ of size $r$ such that  for any  $\bu,\bv\in C$, $\bu_{{R_i}\cup\{i\}}=\bv_{{R_i}\cup\{i\}}$ if and only if  $\bu_{R_i}=\bv_{R_i}$. The set ${R_i}$ is called a recovering set of $i$.
}\end{defn}

 In literature, there are various definitions for locally recoverable code and all of them are equivalent. For example, we have the following two  definitions that are equivalent to Definition \ref{def:2.1}. For the sake of completeness, we give a proof.
 \begin{lemma}\label{lem:equiv}
 A $q$-ary code $C$ of length $n$ is a locally recoverable code if and only if one of the followings holds.
 \begin{itemize}
\item[{\rm (i)}]  For any $i\in[n]$, there exists a subset ${R_i}\subseteq[n]\setminus\{i\}$ of size $r$ such that position $i$ of every codeword $\bc\in C$ is determined by $\bc_{R_i}$, i.e, there is a function $f_i(x_1,\dots,x_r)$ (independent of $\bc$ and only dependent on $i$) such that $c_i=f_i(\bc_{R_i})$, where $\bc_{R_i}$ stands for the projection of $\bc$ at ${R_i}$.
\item[{\rm (ii)}] For any $i\in[n]$, there exists a subset ${R_i}\subseteq[n]\setminus\{i\}$ of size $r$ such that
\[C_{R_i}(i,\Ga)\cap C_{R_i}(i,\Gb)=\emptyset\]
for any $\Ga\neq\Gb\in\F_q$, where $C(i,\Ga)=\{\bc\in C:\; c_i=\Ga\}$ and $C_{R_i}(i,\Ga)$ denotes the projection of $C(i,\Ga)$ on ${R_i}$.
\end{itemize}
\end{lemma}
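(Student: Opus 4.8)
The plan is to show that Definition~\ref{def:2.1} and conditions (i) and (ii) are all equivalent by establishing the cycle of implications $\text{(i)}\Rightarrow\text{Def~\ref{def:2.1}}\Rightarrow\text{(ii)}\Rightarrow\text{(i)}$, working with a single fixed coordinate $i$ and its recovering set $R_i$ throughout (so that the same $R_i$ serves in all three statements). Most of the work is unwinding the definitions; the only mildly delicate point is that Definition~\ref{def:2.1} is phrased as a statement about pairs of codewords agreeing on $R_i\cup\{i\}$ versus on $R_i$, and one must extract from this a genuine function on the projection $C_{R_i}$.

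First I would prove $\text{(i)}\Rightarrow\text{Def~\ref{def:2.1}}$. Fix $i$ and take the set $R_i$ and function $f_i$ from (i). For $\bu,\bv\in C$, the forward direction of the ``if and only if'' in Definition~\ref{def:2.1} is trivial (agreement on the larger set implies agreement on the smaller). Conversely, if $\bu_{R_i}=\bv_{R_i}$, then $u_i=f_i(\bu_{R_i})=f_i(\bv_{R_i})=v_i$, hence $\bu_{R_i\cup\{i\}}=\bv_{R_i\cup\{i\}}$. So $C$ satisfies Definition~\ref{def:2.1} with the same recovering sets.

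Next, $\text{Def~\ref{def:2.1}}\Rightarrow\text{(ii)}$. Fix $i$ and let $R_i$ be as in Definition~\ref{def:2.1}. Suppose for contradiction that for some $\Ga\neq\Gb$ there is a vector $\bw\in C_{R_i}(i,\Ga)\cap C_{R_i}(i,\Gb)$. Then there exist $\bu\in C(i,\Ga)$ and $\bv\in C(i,\Gb)$ with $\bu_{R_i}=\bw=\bv_{R_i}$. By Definition~\ref{def:2.1} this forces $\bu_{R_i\cup\{i\}}=\bv_{R_i\cup\{i\}}$, in particular $\Ga=u_i=v_i=\Gb$, a contradiction. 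Hence the projections are pairwise disjoint, which is (ii).

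Finally, $\text{(ii)}\Rightarrow\text{(i)}$. Fix $i$ and let $R_i$ witness (ii). Since the sets $\{C_{R_i}(i,\Ga)\}_{\Ga\in\F_q}$ are pairwise disjoint and their union is all of $C_{R_i}$ (every codeword has some value in position $i$), we may define $f_i\colon \F_q^r\to\F_q$ as follows: for $\bx\in C_{R_i}$, set $f_i(\bx)=\Ga$ where $\Ga$ is the unique element of $\F_q$ with $\bx\in C_{R_i}(i,\Ga)$; extend $f_i$ arbitrarily (say by $0$) outside $C_{R_i}$. Then $f_i$ depends only on $i$ and $R_i$, not on the codeword, and for any $\bc\in C$ we have $\bc_{R_i}\in C_{R_i}(i,c_i)$, so $f_i(\bc_{R_i})=c_i$. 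This is exactly (i). Since (i) trivially implies the local-recoverability property of Definition~\ref{def:2.1} was already shown, the three conditions are equivalent, completing the proof. $\qed$

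I expect no serious obstacle here; the one place to be careful is making $f_i$ well-defined in the last step, which is precisely where the disjointness hypothesis of (ii) is used, and noting that extending $f_i$ off $C_{R_i}$ is harmless since it is only ever evaluated at points of $C_{R_i}$.
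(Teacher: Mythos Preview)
Your proposal is correct and follows essentially the same approach as the paper: both proofs amount to unwinding the definitions and checking the elementary implications among Definition~\ref{def:2.1}, (i), and (ii). The only cosmetic difference is that you organize the argument as a clean cycle $\text{(i)}\Rightarrow\text{Def}\Rightarrow\text{(ii)}\Rightarrow\text{(i)}$, whereas the paper proves $\text{Def}\Leftrightarrow\text{(i)}$ first and then links (ii) to both; your explicit construction of $f_i$ in the step $\text{(ii)}\Rightarrow\text{(i)}$ is arguably slightly more detailed than the paper's corresponding passage.
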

\begin{proof} Let $C$ be a $q$-ary code of length $n$. Assume that the condition in Definition \ref{def:2.1} is satisfied. For every $i\in[n]$, consider the subset $\{(c_i,\bc_{R_i}):\; \bc\in C\}$ of $\F_q^{r+1}$. As $\bc_{R_i}$ determines $c_i$, we can find a function $f_i(x_1,\dots,x_r)$ from $\F_q^r$ to $\F_q$ (independent of $\bc$) such that $c_i=f_i(\bc_R)$ for every $\bc\in C$. Conversely, if (i) holds, it is clear that $C$ is a locally recoverable code  with locality $r$.

Now assume that $C$ is a locally recoverable code  with locality $r$, i.e., (i) holds. Suppose that, for some $i\in[n]$ and all subsets ${R_i}\subseteq[n]\setminus\{i\}$ of size $r$, $C_{R_i}(i,\Ga)\cap C_{R_i}(i,\Gb)\neq\emptyset$ for $\Ga\neq\Gb\in\F_q$, i.e, there exist two codewords $\bu,\bv\in C$ such that $u_i=\Ga$, $v_i=\Gb$ and $\bu_{R_i}= \bv_{R_i}$. This is a contradiction to the fact that $\Ga=u_i=f_i(\bu_{R_i})=f_i(\bv_{R_i})=v_i=\Gb$. Conversely, assume that (ii) holds. we claim that $\bu_{{R_i}\cup\{i\}}=\bv_{{R_i}\cup\{i\}}$ if and only if  $\bu_{R_i}=\bv_{R_i}$. Otherwise, one would have two codewords $\bu,\bv\in C$ such that $\bu_{R_i}=\bv_{R_i}$ and $u_i\neq v_i$. This implies that $C_{R_i}(i,u_i)\cap C_{R_i}(i,v_i)$ contains $\bu_{R_i}$. This is a contradiction.
\end{proof}

The Singleton (upper) bound in \eqref{eq:x1}  is given in terms of minimum distance $d$. For convenience of this paper,  we can rewrite this bound in terms of dimension $k$.
\begin{lemma}\label{lem:2.1} Let $n,k,d,r$ be positive integers with $(r+1)|n$. If the Singleton-type bound \eqref{eq:x1} is achieved, then
\begin{equation}\label{eq:2}
n-k=\frac{n}{r+1}+d-2-\left\lfloor\frac{d-2}{r+1}\right\rfloor.
\end{equation}
Conversely, if $d-2\not\equiv r\pmod{r+1}$ and the equlity \eqref{eq:2} is satisfied, then  the Singleton-type bound \eqref{eq:x1} is achieved.
\end{lemma}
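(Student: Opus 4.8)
The plan is to unwind the Singleton-type bound \eqref{eq:x1} algebraically under the divisibility hypothesis $(r+1)\mid n$, and then to see precisely when the ceiling and floor functions behave so that the implication can be reversed. First I would substitute $k = n - (n-k)$ into the right-hand side of \eqref{eq:x1}, so that achieving the bound means
\[
d \;=\; n - k - \left\lceil \frac{k}{r}\right\rceil + 2,
\qquad\text{i.e.}\qquad
n-k \;=\; d - 2 + \left\lceil \frac{k}{r}\right\rceil .
\]
The mild annoyance is that the ceiling is on $k/r$, not on $k/(r+1)$, so I would rewrite $\lceil k/r\rceil$ in terms of $n$. Writing $m \eqdef n/(r+1)$ (an integer by hypothesis) and $k = m(r+1) - (n-k) + \cdots$ is circular, so instead I would argue as follows: for an optimal LRC one knows (this is standard and implicit in the setup, e.g.\ from \cite{TB14,GHSY12}) that the dimension can be taken of the form $k = \frac{n}{r+1}\cdot r - s$ for the relevant parameter regime; more robustly, I would just prove the stated identity as a statement about integers $n,k,d,r$, treating \eqref{eq:x1} with equality as the hypothesis. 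Set $a \eqdef \lceil k/r\rceil$. Then $n-k = d-2+a$. I want to show $a = \frac{n}{r+1} - \left\lfloor \frac{d-2}{r+1}\right\rfloor$, equivalently $n - k = \frac{n}{r+1} + d - 2 - \left\lfloor\frac{d-2}{r+1}\right\rfloor$.

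The key computational step is the following elementary lemma on ceilings: for positive integers with $(r+1)\mid n$, one has $\left\lceil \frac{k}{r}\right\rceil = \frac{n}{r+1}$ precisely when $n - k$ and $k$ are balanced as in an optimal LRC, and in general $\left\lceil\frac{k}{r}\right\rceil = \frac{n}{r+1} - t$ where $t$ counts how far $d-2$ is from being $\equiv r \pmod{r+1}$. Concretely I would write $n-k = \frac{n}{r+1} + e$ for an integer $e$, so $k = n - \frac{n}{r+1} - e = \frac{rn}{r+1} - e$, hence $\frac{k}{r} = \frac{n}{r+1} - \frac{e}{r}$ and $\left\lceil \frac{k}{r}\right\rceil = \frac{n}{r+1} + \left\lceil -\frac{e}{r}\right\rceil = \frac{n}{r+1} - \left\lfloor \frac{e}{r}\right\rfloor$. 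Plugging into $n-k = d-2 + \left\lceil k/r\right\rceil$ gives $\frac{n}{r+1} + e = d - 2 + \frac{n}{r+1} - \left\lfloor \frac{e}{r}\right\rfloor$, i.e.\ $e + \left\lfloor \frac{e}{r}\right\rfloor = d-2$. Writing $d - 2 = (r+1)\lfloor \frac{d-2}{r+1}\rfloor + \rho$ with $0 \le \rho \le r$, one checks that $e = \lfloor \frac{d-2}{r+1}\rfloor + \rho$ solves $e + \lfloor e/r\rfloor = d-2$ when $\rho \le r-1$ (and one must track the boundary case $\rho = r$ separately), which after simplification yields $e = d - 2 - \lfloor\frac{d-2}{r+1}\rfloor$, i.e.\ exactly \eqref{eq:2}.

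For the converse, I would run the same chain of equalities backwards: assuming \eqref{eq:2}, set $e = d-2-\lfloor\frac{d-2}{r+1}\rfloor$ and verify $e + \lfloor e/r\rfloor = d-2$, which recovers $d = n-k-\lceil k/r\rceil + 2$, i.e.\ \eqref{eq:x1} with equality. The condition $d - 2 \not\equiv r \pmod{r+1}$ (equivalently $\rho \ne r$) is exactly what is needed to make the floor manipulation $\left\lceil -e/r\right\rceil = -\lfloor e/r\rfloor$ and the identity $e + \lfloor e/r \rfloor = d-2$ hold without an off-by-one correction; in the excluded case $\rho = r$ the solution $e$ of $e + \lfloor e/r\rfloor = d-2$ is not unique (or shifts by one), so \eqref{eq:2} no longer forces \eqref{eq:x1}. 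The main obstacle is purely bookkeeping: correctly handling the case distinction on $\rho = (d-2) \bmod (r+1)$ — in particular the boundary $\rho = r$ — when passing between $\lfloor\cdot/r\rfloor$ and $\lfloor\cdot/(r+1)\rfloor$, since these two floor operations interact in a slightly delicate way. I would isolate this as a one-line sublemma about the equation $e + \lfloor e/r\rfloor = d-2$ so the rest of the argument reads cleanly.
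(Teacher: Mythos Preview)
The paper does not actually supply a proof of this lemma; it simply declares it ``straightforward'' and cites \cite{GXY}. Your parametrization $n-k=\frac{n}{r+1}+e$, leading to the single Diophantine equation $e+\lfloor e/r\rfloor=d-2$, is precisely the direct computation one expects, and it is correct.

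Two small slips to clean up. First, the intermediate claim ``$e=\lfloor\frac{d-2}{r+1}\rfloor+\rho$ solves $e+\lfloor e/r\rfloor=d-2$'' is a typo: you mean $e=r\lfloor\frac{d-2}{r+1}\rfloor+\rho$, which indeed agrees with the formula $e=d-2-\lfloor\frac{d-2}{r+1}\rfloor$ that you state immediately after. Second, the identity $\lceil -x\rceil=-\lfloor x\rfloor$ holds for all real $x$, so that is not the step that breaks when $\rho=r$. What actually happens in the boundary case is that with $e=d-2-\lfloor\frac{d-2}{r+1}\rfloor=r(q+1)$ one gets $e+\lfloor e/r\rfloor=(r+1)(q+1)=d-1\neq d-2$; so for the converse the specific $e$ supplied by \eqref{eq:2} fails the equation, and for the forward direction the equation has \emph{no} integer solution at all (hence, with $(r+1)\mid n$, the Singleton bound cannot be met and the implication is vacuous, consistent with Remark~\ref{rmk:d=r+2}). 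Once you phrase the $\rho=r$ case this way, the sublemma you propose becomes a one-line verification and the whole argument is clean.
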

The proof is  straightforward and can be found in \cite{GXY}.
\begin{rmk}
\label{rmk:d=r+2}
{\rm If $d-2\equiv r\pmod{r+1}$, one can verify that \eqref{eq:2} implies that $r | k$. In this case, by \cite[Corollary 10]{GHSY12} one cannot achieve the Singleton-type bound \eqref{eq:x1} with equality and one must have $d \le n-k-\left\lceil \frac{k}{r}\right\rceil+1$. Therefore in this case we say an LRC attaining this latter bound as optimal.
 }
\end{rmk}
\begin{cor}\label{cor:2.2} If $r\ge d-2$, then an $[n,k,d]$ locally recoverable code with locality $r$ is optimal if
\begin{equation}\label{eq:3}
n-k-\frac{n}{r+1}=d-2.
\end{equation}
\end{cor}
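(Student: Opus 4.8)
The plan is to derive \eqref{eq:3} directly from Lemma~\ref{lem:2.1} under the extra hypothesis $r\ge d-2$. First I would observe that $r\ge d-2$ is equivalent to $0\le d-2\le r<r+1$, so that $\left\lfloor\frac{d-2}{r+1}\right\rfloor=0$. Substituting this into \eqref{eq:2} collapses the right-hand side to $\frac{n}{r+1}+d-2$, which is exactly \eqref{eq:3} after moving $\frac{n}{r+1}$ to the left. This shows that an optimal LRC (one attaining the relevant Singleton-type bound) satisfies \eqref{eq:3}.

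For the converse direction — which is the content actually needed, since the corollary asserts optimality \emph{follows} from \eqref{eq:3} — I would invoke the second half of Lemma~\ref{lem:2.1}. That statement says: if $d-2\not\equiv r\pmod{r+1}$ and \eqref{eq:2} holds, then the bound \eqref{eq:x1} is achieved. So I must check two things: that \eqref{eq:3} together with $r\ge d-2$ implies \eqref{eq:2}, and that the congruence hypothesis $d-2\not\equiv r\pmod{r+1}$ is automatically satisfied. The first is the same floor computation as above run in reverse: since $\left\lfloor\frac{d-2}{r+1}\right\rfloor=0$ when $0\le d-2\le r$, equation \eqref{eq:3} is literally \eqref{eq:2}. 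For the congruence, note that $d-2\equiv r\pmod{r+1}$ with $0\le d-2\le r$ would force $d-2=r$; so as long as $d-2<r$ strictly, Lemma~\ref{lem:2.1} applies verbatim and we are done.

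The one genuinely delicate point — and the place I expect to spend the most care — is the boundary case $d-2=r$, i.e.\ $d=r+2$. Here $d-2\equiv r\pmod{r+1}$, so the clean converse of Lemma~\ref{lem:2.1} does not apply, and indeed Remark~\ref{rmk:d=r+2} explains that in this regime the bound \eqref{eq:x1} cannot be met with equality; instead one renames ``optimal'' to mean attaining $d\le n-k-\lceil k/r\rceil+1$. So to handle $d=r+2$ I would argue separately: from \eqref{eq:3} with $d-2=r$ we get $n-k=\frac{n}{r+1}+r$, and I would substitute this into the bound of Remark~\ref{rmk:d=r+2}, using $\frac{n}{r+1}\cdot r = k + \big(k/r$-type correction$\big)$ together with $(r+1)\mid n$, to verify that $d=n-k-\lceil k/r\rceil+1$ holds with equality. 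This is a short arithmetic check once one writes $n=(r+1)m$ and tracks $k=rm-r$, but it is the step where the two different notions of ``optimal'' must be reconciled, so it deserves explicit verification rather than an appeal to the generic lemma.
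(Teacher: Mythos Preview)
Your argument is correct and follows the same route as the paper: observe that $r\ge d-2$ forces $\left\lfloor\frac{d-2}{r+1}\right\rfloor=0$, whence \eqref{eq:2} and \eqref{eq:3} coincide, and then appeal to Lemma~\ref{lem:2.1}. The paper's proof is literally just that one-line floor computation and does not separately treat the boundary case $d=r+2$; your explicit handling of that case via Remark~\ref{rmk:d=r+2} (and the arithmetic check $n=(r+1)m$, $k=r(m-1)$, $\lceil k/r\rceil=m-1$, giving $n-k-\lceil k/r\rceil+1=r+2=d$) is in fact more careful than the original.
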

\begin{proof} As $r\ge d-2$, $\left\lfloor\frac{d-2}{r+1}\right\rfloor=0$. Hence, \eqref{eq:2} and \eqref{eq:3} are equivalent.
\end{proof}
The locality of a locally recoverable code $C$ can be determined by a parity-check matrix of $C$ as follows. Assume that $(r+1)|n$.
Let  $m=\frac{n}{r+1}$ and let $D_i$ be $(n-k-m)\times(r+1)$ matrices.  Put
\begin{equation}\label{eq:H} H=\left(\begin{array}{c|c|c|c}
\bi&\bo&\cdots&\bo\\ \hline
\bo&\bi&\cdots&\bo \\ \hline
\vdots&\vdots&\ddots&\vdots \\ \hline
\bo&\bo&\cdots&\bi \\ \hline
D_1&D_2&\cdots&D_m
\end{array}
\right),
\end{equation}
where $\bi$ and $\bo$ stand for the all-one row vector and the zero row vector of length $r+1$, respectively.
Let $C$ be the code with $H$ as a parity-check matrix. Then it is clear that the dimension of $C$ is at least $k$. Furthermore, we claim that the locality of $C$ is $r$. Indeed, let $\bc=(c_1,c_2,\dots,c_n)$ be a codeword of $C$, then $\sum_{j=1+(r+1)i}^{(r+1)(i+1)} c_{j}=0$  for $0\le i\le m-1$ as $H \bc^T=\bo$.
Hence, a coordinate $c_{j}$ with $j\in\{1+(r+1)i,\dots, (r+1)(i+1)\}$ for some  $0\le i\le m-1$  can be repaired by $\bc_{R_j}$ with $R_j=\{1+(r+1)i,\dots, (r+1)(i+1)\}\setminus\{j\}$.

In conclusion, to see if a linear code $C$ with a parity-check matrix $H$ of the form \eqref{eq:H} is an optimal locally recoverable code, it is sufficient to check if the minimum distance of $C$ satisfies \eqref{eq:3} for $r\ge d-2$.
\subsection{Graphs}
A undirected graph $G$ is a pair $G=(V,E)$, where $V$ is a finite set and $E$ is a set consisting of some subsets of size $2$ of $V$. An element of $V$ is called a vertex and an element of $E$ is called an edge. A subgraph $G'$ of a graph $G$ is a graph whose vertex set and edge set are subsets of those of $G$.
We say that $G$ has a cycle $(v_1,\ldots,v_m)$ if $\{v_i,v_{i+1}\}\in E$ for $i=1,\ldots,m-1$ and $\{v_m,v_1\}\in E$.
The following Lemma \ref{lm:cycle} provides a simple but useful way to determine if $G$ contains a cycle. The proof can be found in any textbook about graph theory (see \cite{B98} for instance).
\begin{lemma}\label{lm:cycle}
An undirected graph $G$ contains a cycle if $|E|\geq |V|$.
\end{lemma}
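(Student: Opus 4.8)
The plan is to prove the contrapositive: if $G=(V,E)$ is \emph{acyclic} (contains no cycle), then $|E|\le |V|-1$, and in particular $|E|<|V|$. Equivalently, any graph with $|E|\ge |V|$ must contain a cycle.

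The first step is an auxiliary observation: an acyclic graph with at least one edge contains a \emph{leaf}, i.e.\ a vertex of degree $1$. I would prove this by taking a longest path $v_0v_1\cdots v_k$ in $G$, which exists with $k\ge 1$ since $E\neq\emptyset$ and $V$ is finite. Then every neighbor of the endpoint $v_0$ must coincide with $v_1$: a neighbor $v_i$ with $i\ge 2$ would close the cycle $v_0v_1\cdots v_iv_0$, contradicting acyclicity, and a neighbor lying outside $\{v_0,\dots,v_k\}$ would extend the path, contradicting its maximality. Hence $v_0$ has degree $1$.

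The second step is induction on $|V|$. When $|V|=1$ there are no edges (edges are $2$-subsets), so the bound holds. For the inductive step, if $E=\emptyset$ the bound is immediate; otherwise pick a leaf $v$ via the first step and delete it together with its unique incident edge. The resulting graph $G'$ is still acyclic, with $|V|-1$ vertices and $|E|-1$ edges, so the induction hypothesis gives $|E|-1\le(|V|-1)-1$, i.e.\ $|E|\le |V|-1$. This completes the contrapositive, and hence the lemma: if $|E|\ge|V|$ then $G$ is not acyclic, so it contains a cycle.

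I do not anticipate a genuine obstacle, since this is a classical fact whose proof appears in any graph theory textbook (as cited). The only step requiring a real argument rather than bookkeeping is the existence of a leaf in an acyclic graph, which the longest-path argument above handles cleanly; as an alternative one could argue that each connected component is a tree, that a tree on $m$ vertices has exactly $m-1$ edges, and sum over the components to get $|E|=|V|-(\text{number of components})\le |V|-1$.
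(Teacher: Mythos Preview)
Your proof is correct and is exactly the standard textbook argument; the paper itself does not give a proof of this lemma but simply refers the reader to a graph theory textbook, so there is nothing further to compare.
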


Apart from the above  usual definition of graph, we also require some results on hypergraph in this paper. A hypergraph is a generalization of a graph in which an edge can join any number of vertices. Formally, a hypergraph $H$ is a pair $H=(X,E)$  where $X$ is a set of elements called vertices, and $E$ is a set of non-empty subsets of $X$ called hyperedges or edges. Therefore, $E$ is a subset of $2^X\setminus\{\emptyset\}$, where $2^X$ stands for the power set of $X$.
\begin{defn}[$r$-uniform Hypergraph (or $r$-hypergraph for short)]
A  hypergraph $H=(X,E)$  is called  $r$-uniform if every hyperedge in $E$ has size $r$. In other words,  every hyperedge of an $r$-uniform hypergraph connects exactly $r$ vertices.
\end{defn}
There are several ways to define cycles in a hypergraph that coincide with the definition of cycles in the usual graph. In this paper, we use the Berge cycle as the generalization of cycles in the usual graph.

\begin{defn}[Berge cycle]{\rm
A $r$-uniform hypergraph $H=(X,E)$ contains a Berge $k$-cycle $(v_1,\ldots,v_k)$ if there exist $k$ hyperedges $e_1,\ldots,e_k\in E$ such that $\{v_{i-1},v_i\}\subseteq e_i$ for $i=2,\ldots,k$  and $\{v_1,v_k\}\subseteq e_1$.}
\end{defn}



\section{A criterion on minimum distance }

It follows from Corollary \ref{cor:2.2} that for $d\leq r+2$, a locally recoverable code with
parity-check matrix $H$ in \eqref{eq:H} is optimal provided that any $d-1$ columns of $H$ are linearly independent and each $D_i$ is
a $(d-2)\times (r+1)$ matrix. 

Let $\F_q$ be a finite field and put $m=\frac{n}{r+1}$. Assume that $A_1,\ldots,A_{m}$ are subsets of $\F_q$, each of size $r+1$.
Let $A_i=\{a_{i,1},\ldots,a_{i,r+1}\}$ for $i=1,\ldots,m$.
Let $\ba_{i,j}=(a_{i,j}, a_{i,j}^2,\dots, a_{i,j}^{d-2})$ and put $D_i=(\ba_{i,1}^T,\ba_{i,2}^T,\ldots,\ba_{i,r+1}^T)$.
Thus, $D_i$ is a Vandermond-type matrix.
Let $\be_1,\ldots,\be_m$ be the standard basis of vector space $\F_q^m$, i.e., all components of $\be_i$ are $0$ except that the $i$-th component is $1$. Then, we can rewrite $H$ as follow.
\begin{equation}\label{eq:Vandermond}
H=\left(
    \begin{array}{ccccccc}
      \be_1^T & \cdots & \be_1^T & \cdots &\be_{m}^T &  \cdots  & \be_m^T \\
      \ba_{1,1}^T & \cdots &\ba_{1,r+1}^T & \cdots &\ba_{m,1}^T & \cdots & \ba_{m,r+1}^T\\
    \end{array}
  \right).
\end{equation}

We now present a sufficient and necessary condition under which any $d-1$ columns of the matrix $H$ in \eqref{eq:Vandermond}   are linearly independent.
\begin{theorem}\label{thm:set} For $d\ge 5$, then
any $d-1$ columns of $H$ defined in \eqref{eq:Vandermond} are linearly independent if and only if $|\bigcup_{i\in S}A_i|\geq r|S|+1$ for
any $S\subseteq [m]$ of size no more than $t=\lfloor\frac{d-1}{2}\rfloor$.
\end{theorem}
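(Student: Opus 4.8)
The plan is to analyze when a linear dependence among $\le d-1$ columns of $H$ can exist, exploiting the block-plus-Vandermonde structure. Suppose we pick a set of columns indexed by a subset $T$ of size $\le d-1$; say $T$ meets block $i$ in the set $T_i$, and let $S = \{i : T_i \ne \emptyset\}$ be the set of blocks actually touched. A nontrivial dependence $\sum_{(i,j)\in T}\lambda_{i,j}\,(\be_i^T;\ba_{i,j}^T)=\bo$ splits into the ``top'' equations $\sum_{j\in T_i}\lambda_{i,j}=0$ for each $i\in S$, and the ``bottom'' equations $\sum_{(i,j)\in T}\lambda_{i,j}\ba_{i,j}^T=\bo$ in $\F_q^{d-2}$. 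The top equations force $|T_i|\ge 2$ for every block $i\in S$ that carries a nonzero coefficient, hence $|T|\ge 2|S|$, so $|S|\le \lfloor(d-1)/2\rfloor = t$. This is why the threshold $t$ appears. First I would set up exactly this bookkeeping and reduce the whole question to: for every $S$ of size $\le t$ and every choice of $T_i\subseteq A_i$ with $|T_i|\ge 2$ and $\sum_i|T_i|\le d-1$, the only solution of the coupled system above is the trivial one.

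Next I would translate the bottom equations into a statement about a Vandermonde-type matrix. Writing $\mu_{i,j}=\lambda_{i,j}$ for the touched columns, the bottom block says the vector of all $\mu_{i,j}$ lies in the left kernel of the $(d-2)\times(\sum|T_i|)$ matrix whose columns are $(\alpha,\alpha^2,\dots,\alpha^{d-2})^T$ for the chosen field elements $\alpha=a_{i,j}$. Adjoining the top constraints amounts to prepending a row of block-indicator patterns; concretely, consider the $(|S| + d - 2)\times(\sum|T_i|)$ matrix $M$ whose columns are $(\be_i^T;\, \alpha;\dots;\alpha^{d-2})$ — precisely the submatrix of $H$ we started with. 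So ``any $d-1$ columns independent'' is literally the statement we must prove, and the content is the equivalence with the set condition. The forward direction (independence $\Rightarrow$ union bound) I would prove by contraposition: if some $S$ with $|S|\le t$ has $|\bigcup_{i\in S}A_i| \le r|S|$, I will build a dependent set of $\le d-1$ columns. Here I expect to use a counting/pigeonhole argument: inside the blocks of $S$ there are $(r+1)|S|$ columns but their field-values occupy only $\le r|S|$ distinct elements of $\F_q$; I want to extract a small multiset of columns, using at most $d-2$ distinct values and at least two columns per used block, on which a Vandermonde-style kernel vector exists. The natural tool is that any $d-1$ columns of a $(d-2)$-row Vandermonde matrix are dependent, combined with choosing which columns to keep so that the top equations can also be satisfied; arranging the ``$\ge 2$ per block'' condition simultaneously with ``$\le d-2$ distinct values'' and ``$\le d-1$ columns total'' is the delicate combinatorial selection step.

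For the reverse direction (union condition $\Rightarrow$ independence) I would argue directly: given a putative dependence supported on blocks $S$ with $|S|\le t$ and $|T_i|\ge 2$, look at the distinct field values $V = \bigcup_{i\in S}\{a_{i,j} : j\in T_i\}$ appearing. For each value $\alpha\in V$, sum the coefficients $\mu_{i,j}$ over all touched columns carrying that value; the bottom equations say this aggregated coefficient vector is killed by the $(d-2)\times|V|$ Vandermonde matrix on $V$, which is injective once $|V|\le d-2$, forcing every aggregated coefficient to vanish. To finish I need that the aggregation does not hide a nontrivial solution — i.e.\ that within each fixed block the values $a_{i,j}$ are distinct (true, since $A_i$ has size $r+1$) and that the per-block top equation $\sum_{j\in T_i}\mu_{i,j}=0$ together with the vanishing aggregated sums forces all $\mu_{i,j}=0$. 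The key quantitative input is that $|V|\le \sum_{i\in S}|T_i| \le d-1$, but we actually need $|V|\le d-2$: this is exactly where the union lower bound $|\bigcup_{i\in S}A_i|\ge r|S|+1$ is leveraged — it guarantees enough ``spread'' that a minimal counterexample must reuse values across blocks, and a short extremal-case analysis (treating $|V| = d-1$ separately, where the Vandermonde kernel is one-dimensional and the top equations become an independent linear constraint that the kernel vector cannot satisfy) closes the gap. The main obstacle I anticipate is precisely this boundary case $|V|=d-1$ in both directions, and the careful selection of which columns to retain so that all three constraints (row count $d-2$, at least two columns per block, total $\le d-1$) hold at once; the rest is linear algebra over Vandermonde matrices plus the pigeonhole translation of the set inequality.
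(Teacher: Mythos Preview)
Your reduction is correct up to the point where you split into the ``top'' block equations and the ``bottom'' Vandermonde equations, and observe that only blocks carrying at least two coefficients matter, so $|S|\le t$. From there, however, both directions have a real gap, and it is the same missing idea in each.

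In the ``if'' direction with $|V|\le d-2$, you aggregate by value to get $c_\alpha=0$ for every $\alpha\in V$, and then assert that these together with the block equations $\sum_{j\in T_i}\mu_{i,j}=0$ force $\mu\equiv 0$. That implication is precisely the heart of the matter, and it is \emph{false} without invoking the union hypothesis: take two blocks sharing two values $a,b$ and put $(\mu_{1,a},\mu_{1,b},\mu_{2,a},\mu_{2,b})=(1,-1,-1,1)$; every value-sum and every block-sum vanishes but $\mu\ne 0$. You never actually use $|\bigcup_{i\in S}A_i|\ge r|S|+1$ at this step --- you only mention it as somehow related to whether $|V|\le d-2$, which is neither what it says nor what is needed (the case $|V|=d-1$ is the \emph{easy} one, handled as you note by adjoining the all-ones row). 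The system ``$c_\alpha=0$ for all $\alpha$'' plus ``block-sums $=0$'' is the signed incidence system of the bipartite graph on $S\cup V$ with an edge for each $(i,\alpha)$ with $\alpha\in T_i$, and it has a nontrivial solution exactly when that graph contains a cycle; the union condition is what excludes such cycles of length $\le t$. The paper makes this explicit: after column-reducing each repeated-value column to a difference $\be_{j}-\be_{k}$, it shows that any dependence among these differences would produce a subfamily $W\subseteq S$ with $|\bigcup_{\ell\in W}A_\ell|\le (r+1)|W|-|P|\le r|W|$, contradicting the hypothesis.

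In the ``only if'' direction you promise to ``extract a small multiset of columns'' by pigeonhole, but pigeonhole does not by itself produce $\le d-1$ columns with $\ge 2$ per used block and an actual linear relation. The paper's construction is again a cycle: one first checks $|A_i\cap A_j|\le 1$ (otherwise four columns are already dependent), then builds the graph on $S$ with an edge $\{i,j\}$ whenever $A_i\cap A_j\ne\emptyset$. The violated union bound forces $|E|\ge |S|$, so Lemma~\ref{lm:cycle} yields a cycle $(i_1,\dots,i_\ell)$ with $\ell\le|S|\le t$; taking the two columns $(\be_{i_k},\ba_{j_{k-1}})^T$ and $(\be_{i_k},\ba_{j_k})^T$ from each block on the cycle gives $2\ell\le 2t\le d-1$ columns that telescope to zero. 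This cycle-finding step is the concrete ``selection'' you flagged as delicate but did not supply.
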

\begin{proof}
We first prove the ``if'' direction. Let $\bh_{i,j}$ be the $(i,j)$th  column of $H$, i.e., $\bh_{i,j}=(\be_i,\ba_{i,j})^T$ for $1\le i\le m$ and $1\le j\le r+1$.
Choose any $d-1$ columns $\{\bh_{i,j}\}_{1\le i\le m;j\in S_i}$ of $H$, where $S_i$ are subsets of $[r+1]$ satisfying $\sum_{i=1}^m|S_i|=d-1$. Let $H'$ be the $(n-k-m)\times (d-1)$ matrix consisting of these $d-1$ columns. We are going to show that $H'$ has rank $d-1$.
We assume that $S_i$ is either empty or of size at least $2$. Otherwise, the only column selected from $D_i$ with $|S_i|=1$ must be linearly independent from the rest $d-2$ columns. We can consider the linear independence of the rest $d-2$ columns instead.
Now, we assume that there are at most $t$ non-empty sets $S_i$.
Let $A=\{a_{i,j}\}_{1\le i\le m;j\in S_i}$. Assume that $A=\{a_1,\ldots,a_s\}$ has $s$ distinct elements. If $s=d-1$, then by elementary row operations, one can find a $(d-1)\times (d-1)$ Vandermond submatrix of the form
$$
\left(
    \begin{array}{cccc}
      1 &  1  &  \cdots  & 1 \\
      \ba_{1}^T & \ba_{2}^T & \cdots  & \ba_{d-1}^T\\
    \end{array}
  \right)
$$
of $H'$, where $\ba_i=(a_i,a_i^2,\dots,a_i^{d-2})$. Thus, the rank of $H'$ is $d-1$.

We proceed to the case where
$s<d-1$.
By permuting the columns of $H'$, we obtain  a matrix  of  the following form:
$$
H_1=\left(
      \begin{array}{cccc|ccc}
        \be_{i_1}^T & \be_{i_2}^T & \cdots & \be_{i_s}^T & \be_{i_{s+1}}^T & \cdots & \be_{i_{d-1}}^T \\
        \ba_1^T &  \ba_2^T & \cdots & \ba_s^T & \ba_{s+1}^T & \cdots & \ba_{d-1}^T \\
      \end{array}
    \right),
$$
where $1\le i_1\le i_2\le\cdots\le i_{d-1}\le m$ and $\{a_{s+1},\ldots,a_{d-1}\}$ is a subset of $A$. Thus, $a_j$ belongs to $A_{i_j}$ for $1\le i\le d-1$.
By elementary column operations, we can erase $\ba_{s+i}^T$ since it also appears in one of  the first $s$ columns.
Hence, $H_1$ is equivalent to
$$
H_2=\left(
      \begin{array}{cccc|ccc}
        \be_{i_1}^T & \be_{i_2}^T & \cdots & \be_{i_s}^T & \be_{i_{s+1}}^T-\be_{k_{s+1}}^T & \cdots & \be_{i_{d-1}}^T-\be_{k_{d-1}}^T \\
        \ba_1^T &  \ba_2^T & \cdots & \ba_s^T & \mathbf{0}^T & \cdots & \mathbf{0}^T \\
      \end{array}
    \right),
$$
where  $\{k_{s+1},\dots,k_{d-1}\}$ is a subset of $\{i_1,\dots,i_{s}\}$. Since
$H_2$ is an upper left triangular block matrix, showing that $H_2$ is a full-rank matrix is equivalent to showing both  $(\ba_1^T,\ba_2^T,\ldots,\ba_s^T)$ and $(\be_{i_{s+1}}^T-\be_{k_{s+1}}^T, \cdots, \be_{i_{d-1}}^T-\be_{k_{d-1}}^T)$ have full rank.
Note that $(\ba_1^T,\ba_2^T,\ldots,\ba_s^T)$ is a $(d-2)\times s$ Vandermond matrix and hence it has full rank $s$.
It remains to show that $\be_{i_{s+1}}-\be_{k_{s+1}},\ldots,\be_{i_{d-1}}-\be_{k_{d-1}}$ are linearly independent. Suppose they were linearly dependent. Then
 there exist elements $\lambda_{s+1},\ldots,\lambda_{d-1}\in \F_q$ which are not all zero such that
$$
\sum_{i=s+1}^{d-1}\lambda_i (\be_{j_i}-\be_{k_i})=0.
$$
Let $P$ be the subset of $\{s+1,\ldots,d-1\}$ such that $\lambda_i\neq 0$ if and only if $i\in P$. It follows that
\begin{equation}\label{eq:linear}
\sum_{i\in P}\lambda_i (\be_{j_i}-\be_{k_i})=0.
\end{equation}
Let $U=\{j_i:i\in P\}$, $V=\{k_i:i\in P\}$ and $W=U\cup V$. As both $U$ and $V$ are subsets of $\{i\in[m]:\; |S_i|\ge 2\}$, we have $|W|\le t=\left\lfloor\frac{d-1}2\right\rfloor$.
Since $\lambda_i$ is nonzero for all $i\in P$,  every $\ell\in W$ must appear at least twice in the multiset
consisting of elements of $U$ and $V$. Otherwise, $\be_\ell$ could not be cancelled in \eqref{eq:linear}.
This implies $|W|\leq |P|$.

On the other hand,  for each $a_i\in A$, there is exactly one subset $A_{k_i}$ containing $a_i$ since
the first $s$ columns have $s$ distinct $\ba_i$.
Furthermore, let $t_i=|\{\ell\in U: a_i\in A_\ell|$. It follows that $\sum_{a_i\in A}t_i=|P|$ and $a_i$ belongs to $t_i+1$ subsets in $\{A_\ell: \ell\in W\}$. This implies
$$
\left|\bigcup_{\ell\in W}A_{\ell}\right|\leq \sum_{\ell\in W}|A_\ell|-\sum_{i=1}^st_i=(r+1)|W|-|P|.
$$
Combining with the condition $|\bigcup_{\ell\in W}A_{\ell}|\geq r|W|+1$ forces $|W|\geq |P|+1$. A contradiction occurs and we complete the proof of the ``if" direction.


We proceed to the ``only if'' direction. First, we claim that $|A_i\cap A_j|\leq 1$ for any $i\neq j$. Otherwise,  we may assume that $A_i\cap A_j$ contains two distinct elements $a_1$ and $a_2$. Thus, $H$ contains the four linearly dependent columns $(\be_i,\ba_1)^T,(\be_i,\ba_2)^T,(\be_j,\ba_1)^T$ and $(\be_j,\ba_2)^T$.

We prove the ``only if'' part by contradiction.
Without loss of generality, we assume that the first $s$ subsets $A_1,\ldots,A_s$ do not satisfy the condition, i.e. $|\bigcup_{i=1 }^s A_i|\leq sr$, where $s$ satisfies $s\le t$.
Define an undirected graph $G=([s],E)$ such that
$\{i,j\}\in E$ if and only if $A_i\cap A_j\neq \emptyset$. By inclusion-exclusion principle, we have
$$
rs\ge \left|\bigcup_{i=1 }^s A_i\right|\geq \sum_{i=1}^s |A_i|-\sum_{(i,j)\in E} 1=s(r+1)-|E|.
$$
This implies $|E|\geq s$. By Lemma \ref{lm:cycle}, there exists a cycle in this undirected graph. Without loss of generality, we may assume that $(1,\ldots,\ell)$ is a cycle, i.e., $\{i,i+1\}\in E$ for $i=1,\ldots,\ell-1$ and $\{\ell,1\}\in E$.
By the definition of $E$,  $A_{i}$ and $A_{i+1}$ contains a common element $\{a_{j_i}\}$. Then, we can pick two columns $(\be_i, \ba_{j_{i-1}})^T$\footnote{Define $\ba_{j_0}=\ba_{j_\ell}$ for simplicity.} and $(\be_i, \ba_{j_{i}})^T$ from the $i$-th block $D_i$ for $i=1,\ldots,\ell$. These $2\ell$ columns are linearly dependent since
$$
\sum_{i=1}^{\ell}\bigg((\be_i, \ba_{j_{i-1}})-(\be_i, \ba_{j_{i}})\bigg)=\sum_{i=1}^{\ell}(0, \ba_{j_{i-1}}-\ba_{j_{i}})=\mathbf{0}.
$$
The proof is completed.
\end{proof}
By Theorem \ref{thm:set}, we immediately obtain the following result.
\begin{theorem}\label{thm:opt}
If $t=\left\lfloor\frac{d-1}2\right\rfloor\ge 2$ and $(r+1)|n$, then there exists a $q$-ary  optimal linear LRC with length $n$, minimum distance $d$ and locality $r$ provided that there are $m=\frac{n}{r+1}$ sets $A_1,\ldots,A_m\subseteq \F_q$ such that
\begin{equation}\label{condition}
\begin{array}{ll}
              |A_i|=r+1 & \text{ for  $1\leq i\leq m$,} \\
              |\bigcup_{i\in S}A_i|\geq |S|r+1 & \text{ for any $S\subseteq [m]$  of size at most $t$.}
            \end{array}
\end{equation}
\end{theorem}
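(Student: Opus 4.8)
The plan is to derive Theorem~\ref{thm:opt} directly from Theorem~\ref{thm:set} together with the structural discussion of the parity-check matrix $H$ in \eqref{eq:H} and Corollary~\ref{cor:2.2}. First I would set up the code: given sets $A_1,\dots,A_m\subseteq\F_q$ of size $r+1$ satisfying \eqref{condition}, write $A_i=\{a_{i,1},\dots,a_{i,r+1}\}$, form the Vandermonde vectors $\ba_{i,j}=(a_{i,j},a_{i,j}^2,\dots,a_{i,j}^{d-2})$, assemble the blocks $D_i=(\ba_{i,1}^T,\dots,\ba_{i,r+1}^T)$, and let $H$ be the matrix in \eqref{eq:Vandermond}, which is precisely of the form \eqref{eq:H} with $\bi$-rows enforcing locality. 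Let $C$ be the code with parity-check matrix $H$; it has length $n=m(r+1)$, the block structure shows it has locality $r$ (each coordinate in block $i$ is recovered from the other $r$ coordinates in that block, since the first $m$ rows of $H$ force the block sums to vanish), and $H$ has $n-k-m=d-2$ rows below the identity part, so $n-k=m+d-2=\frac{n}{r+1}+d-2$, i.e.\ $n-k-\frac{n}{r+1}=d-2$.

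Next I would invoke Theorem~\ref{thm:set}: the hypothesis $t=\lfloor\frac{d-1}{2}\rfloor\ge 2$ means $d\ge 5$, so the theorem applies, and condition \eqref{condition} is exactly the statement that $|\bigcup_{i\in S}A_i|\ge r|S|+1$ for every $S\subseteq[m]$ with $|S|\le t$. Hence any $d-1$ columns of $H$ are linearly independent, which means every nonzero codeword of $C$ has Hamming weight at least $d$, i.e.\ $d(C)\ge d$. (One should note $C$ is nonzero: the bottom $d-2$ Vandermonde rows plus the $m$ locality rows total $n-k=m+d-2<n$ rows for $r\ge d-2\ge 3$ and $m\ge 1$, so the kernel is nontrivial; in any case the dimension of $C$ is at least $k=n-(m+d-2)$.)

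Finally I would close the argument with Corollary~\ref{cor:2.2}. Since $r\ge d-2$ (which follows from $t\ge 2$, as $r+1\ge$ block size $\ge d-1$ is forced by the sets $A_i$ having size $r+1$ and \eqref{condition} requiring e.g.\ $|A_i|\ge r+1$; more directly we are in the regime $d\le r+2$ throughout Section~3), the corollary says a locally recoverable code with locality $r$ satisfying $n-k-\frac{n}{r+1}=d-2$ is optimal. We have verified $n-k-\frac{n}{r+1}=d-2$, locality $r$, and $d(C)\ge d$; combined with the Singleton-type bound \eqref{eq:x1} (equivalently \eqref{eq:3}) which caps $d(C)$, we get $d(C)=d$ exactly, and $C$ is the desired optimal $[n,k,d]$ LRC.

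The argument is essentially bookkeeping: the genuine content has already been placed in Theorem~\ref{thm:set}. The only place where a little care is needed — and the step I would treat as the main (minor) obstacle — is confirming that the matrix $H$ built from the $A_i$ is literally an instance of the form \eqref{eq:H} with $D_i$ of size $(d-2)\times(r+1)$, so that the locality-$r$ discussion following \eqref{eq:H} and the dimension count both go through cleanly; after that, Theorem~\ref{thm:set} and Corollary~\ref{cor:2.2} do all the work, and there is nothing left to compute.
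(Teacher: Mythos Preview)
Your proposal is correct and follows exactly the paper's approach: the paper simply states ``By Theorem~\ref{thm:set}, we immediately obtain the following result,'' and your write-up supplies precisely the routine bookkeeping (construct $H$ as in \eqref{eq:Vandermond}, read off locality~$r$ and the relation $n-k-\tfrac{n}{r+1}=d-2$, invoke Theorem~\ref{thm:set} for $d(C)\ge d$, and close with Corollary~\ref{cor:2.2}). One small wobble: the inequality $r\ge d-2$ does \emph{not} follow from $t\ge 2$ as you briefly suggest --- it is a standing assumption in Section~3 (and in Theorem~\ref{thm:main}), and you are right to fall back on that.
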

\begin{rmk} As we do not require that $q$ is even, the constraint required in \cite{J18}  that $q$ is even for $d=6$  can be removed.
\end{rmk}

\section{Random and algorithmic constructions}
In the previous section, we converted construction of optimal LRCs into a problem of finding subsets of $\F_q$ satisfying \eqref{condition}. In this section, we first present a random construction of subsets satisfying \eqref{condition}. In addition, we can derandomize this random  construction into a deterministic construction in polynomial time if $d$ is constant.

The case $t=2$, i.e., $d=5$ and $6$, is equivalent to the design of constant weight codes \cite{J18}.
In this section, we assume $t\geq 3$.
Since the algebraic structure is not important for the union of set. We replace $\F_q$ with $[q]$ from now on.
\begin{theorem}\label{thm:ex}
There exist $m=\left\lceil\frac{q^{1+\frac{1}{t-1}}}{2t^2(r+1)^{2+\frac{2}{t-1}}}\right\rceil$ sets $A_1,\ldots,A_m$ satisfying \eqref{condition} provided $q$ is large enough.
\end{theorem}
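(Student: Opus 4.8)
The plan is to use the probabilistic method with alteration. First I would sample $m_0$ sets $A_1, \dots, A_{m_0}$ independently and uniformly at random from the collection of all $(r+1)$-subsets of $[q]$, where $m_0$ is chosen to be a constant factor (say $2$) larger than the target $m$. The first condition in \eqref{condition} holds automatically by construction. For the second condition, I would call a subset $S \subseteq [m_0]$ with $2 \le |S| \le t$ \emph{bad} if $|\bigcup_{i \in S} A_i| \le |S|r$, and I would bound the expected number of bad sets, then delete one set from each bad $S$ to obtain a subcollection satisfying \eqref{condition}.

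The key calculation is estimating $\Pr[S \text{ is bad}]$ for a fixed $S$ of size $s$. If $|\bigcup_{i\in S}A_i| \le sr = s(r+1) - s$, then by inclusion–exclusion the sets must overlap in at least $s$ "incidences" beyond disjointness; more carefully, as in the proof of Theorem~\ref{thm:set}, one can extract from $S$ a sub-structure forcing a cycle among the sets in the intersection graph, so there is a cyclic sequence of indices $i_1, \dots, i_\ell$ (with $\ell \le s$) and elements witnessing $A_{i_j} \cap A_{i_{j+1}} \ne \emptyset$. The probability that two fixed random $(r+1)$-subsets of $[q]$ intersect is at most $(r+1)^2/q$, and for a cycle of length $\ell$ one gets a bound of roughly $\left((r+1)^2/q\right)^{\ell} \cdot (\text{polynomial in } r, \ell \text{ counting factor})$; summing over the choice of cycle structure inside $S$ and noting $\ell \ge 2$ gives $\Pr[S \text{ bad}] = O_{r,t}\!\left(q^{-2}\right)$ at worst, but the dominant contribution (length-$2$ "cycles", i.e. two sets meeting in $\ge 2$ points, or more relevantly the threshold case) is what pins down the exponent. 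Then the expected number of bad sets of size $s$ is at most $\binom{m_0}{s}$ times this probability; choosing $m_0 \approx q^{1 + 1/(t-1)}/(t^2(r+1)^{2+2/(t-1)})$ makes the total expected number of bad sets at most $m_0/2$, so by averaging there is an outcome with at most $m_0/2$ bad sets, and deleting one index per bad set leaves at least $m_0/2 = m$ sets satisfying \eqref{condition}.

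I would organize the counting step by fixing the "worst" constraint: a size-$s$ set $S$ being bad is governed by the event that the intersection pattern among $A_i$, $i \in S$, is rich enough, and the tightest regime is $s = t$ where $\binom{m_0}{t} \approx m_0^t / t!$ must be beaten by $\Pr[S \text{ bad}]$. Tracking the combinatorics, a bad $S$ of size $t$ forces at least $t$ pairwise intersections among its members (to drop the union below $tr + 1$), and the probability of any fixed set of $t$ prescribed intersections occurring is at most $\big((r+1)^2/q\big)^{t}$ up to a counting factor bounded by $t^t (r+1)^{2t}$ or so. Balancing $m_0^t \cdot \big((r+1)^{O(1)}/q\big)^{t-1} \lesssim m_0$ yields exactly $m_0 \lesssim q^{1+1/(t-1)} / (r+1)^{2 + 2/(t-1)} t^{2}$, matching the stated bound after absorbing constants. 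The hard part will be the careful bookkeeping of the counting factor in the cycle/intersection-pattern enumeration so that it is absorbed into the $t^2$ and $(r+1)^{2+2/(t-1)}$ in the denominator rather than blowing up the exponent of $q$; this requires isolating that a bad $S$ always contains a "core" of size $\ell$ with $\ell-1 \ge \lceil (t-1) \cdot (\text{something}) \rceil$ worth of independent intersection events, which is precisely the cycle extraction argument already used in Theorem~\ref{thm:set}, reused here in expectation.
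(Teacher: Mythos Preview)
Your overall framework---sample $2m$ random $(r+1)$-subsets, bound the expected number of bad $S$, delete one index per bad $S$---is exactly the paper's approach. The gap is in the probability estimate for a single bad $S$, where your sketch is internally inconsistent and in places incorrect.

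First, the cycle-extraction route in your first paragraph only yields $\Pr[S\text{ bad}]=O_{r,t}(q^{-\ell})$ for the \emph{shortest} cycle length $\ell$, which can be $2$; that bound is far too weak to control $\sum_{|S|\le t}\binom{m_0}{|S|}\Pr[S\text{ bad}]$ when $m_0\approx q^{1+1/(t-1)}$. Second, the claim in your second paragraph that ``a bad $S$ of size $t$ forces at least $t$ pairwise intersections'' is false: take $|A_1\cap A_2|=t$ and $A_3,\dots,A_t$ disjoint from everything; then $|\bigcup_{i\le t}A_i|=tr$ so $S$ is bad, yet there is only one intersecting pair. Third, your balancing equation $m_0^t\cdot\big((r+1)^{O(1)}/q\big)^{t-1}\lesssim m_0$ does \emph{not} give $m_0\lesssim q^{1+1/(t-1)}$; solving it gives only $m_0\lesssim q$. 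The exponent on $q$ must be $t$, not $t-1$, to reach the stated bound.

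The paper avoids all of this by counting \emph{repetitions} rather than intersecting pairs: list the $a(r{+}1)$ elements of $A_{i_1},\dots,A_{i_a}$ in order; the event $|\bigcup A_{i_j}|\le ar$ is exactly the event that at least $a$ of these elements coincide with an earlier one. Since each coincidence has probability at most $a(r{+}1)/q$, a union bound over the $\binom{a(r+1)}{a}$ choices of repeating positions gives $\Pr[Y_S=1]\lesssim \frac{1}{a!}\big(a^2(r{+}1)^2/q\big)^{a}$. This is the correct $q^{-a}$ decay for every size $a$, and summing $\binom{2m}{a}\cdot\frac{1}{a!}\big(a^2(r{+}1)^2/q\big)^{a}$ over $2\le a\le t$ is what balances to $m\approx q^{1+1/(t-1)}/\big(2t^2(r{+}1)^{2+2/(t-1)}\big)$. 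Replace your intersection/cycle bookkeeping with this direct repetition count and the rest of your outline goes through unchanged.
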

\begin{proof}
Let $X_i=\{x_{i,1},\ldots,x_{i,r+1}\},i=1,\ldots,2m$ be the set picked uniformly at random over all $r+1$-sized subsets of $[q]$.
Define the binary random variable $Y_S$ such that $Y_S=1$ if $|\bigcup_{i\in S}X_i|\leq |S|r$ and $0$ otherwise.
Our goal is to bound the expectation $E\left[\sum_{S\subseteq[2m],|S|\leq t}Y_S\right]$.
Without loss of generality, we may assume that $S=\{1,\ldots,a\}$ for some $1<a\leq t$.
We order the random variables in $X_i,i=1,\ldots,a$, i.e.,
$x_{1,1},\ldots,x_{1,r+1},\dots,x_{a,1},\ldots,x_{a,r+1}$.
We want to bound the probability of the event $Y_S=1$, i.e., at least $a$ elements repeated in this sequence.
Given an element $x_{i,j}$,  the probability that
$x_{i,j}\neq x_{i',j'}$ for some $x_{i',j'}$ prior to $x_{i,j}$ is at least $1-\frac{(i-1)(r+1)+j}{q}\geq 1-\frac{a(r+1)}{q}$.
Taking over all  sets of size at least $a$ in this sequence, the probability of $Y_S=1$ is at most
$$
\sum_{i=a}^{a(r+1)}{{a(r+1)}\choose {i}}\left(\frac{a(r+1)}{q}\right)^{i}\leq \sum_{i=a}^{a(r+1)}\frac{\big(a(r+1)\big)^{i}}{i!}\left(\frac{a(r+1)}{q}\right)^{i}
\leq \frac{1.1}{a!}\left(\frac{a^2(r+1)^2}{q}\right)^{a}.
$$
for $q\geq 10a^2(r+1)^2$.
It follows that
\begin{eqnarray*}
E\left[\sum_{S\subset[2m],|S|\leq t}Y_S\right]&=&\sum_{i=2}^{t}\sum_{S\subset[2m],|S|=i}\Pr[Y_S=1]\\
&\leq& \sum_{i=2}^{t}\binom{2m}{i}\frac{1.1}{i!}\left(\frac{i^2(r+1)^2}{q}\right)^{i}\leq
\sum_{i=2}^{t} 1.1(\frac{1}{i!})^2\left(\frac{2mi^2(r+1)^2}{q}\right)^{i}
\\
&\leq&\sum_{i=2}^{t} 1.1\left(\frac{1}{i!}\right)^2\left(\frac{q}{(r+1)^2}\right)^{\frac{i}{t-1}}
\leq 1.1\times 1.5\left(\frac{1}{t!}\right)^2\left(\frac{q}{(r+1)^2}\right)^{\frac{t}{t-1}}\\
&\leq&\frac{2}{4t^2}\left(\frac{q}{(r+1)^2}\right)^{\frac{t}{t-1}} \leq m.
\end{eqnarray*}
for $q\geq t^{2t}3^t(r+1)$ and $t\geq 3$.
The second inequality is due to $\binom{2m}{i}\leq \frac{(2m)^i}{i!}$ and
the third inequality is due to
$$\left(\frac{1}{i!}\right)^2\left(\frac{q}{(r+1)^2}\right)^{\frac{i}{t-1}}\geq 3\left(\frac{1}{(i-1)!}\right)^2\left(\frac{q}{(r+1)^2}\right)^{\frac{i-1}{t-1}}.$$

That means there exists $2m$ $(r+1)$-sized sets $A_1,\ldots,A_{2m}$ such that there are at most $m$ subsets $S\subseteq [2m]$ with
$|\bigcup_{i\in S}A_i|\leq |S|r$. For each of these $m$ subsets $S$,
remove one set from $A_i,i\in S$. The desired result follows as we remove at most $m$ sets.
\end{proof}

Theorem \ref{thm:ex} is an existence proof. However, if $t$ is a constant, it is possible to turn this argument into an
algorithm via the method of conditional probabilities.

\begin{theorem}
There exists a polynomial-time deterministic algorithm to find $m$ sets in Theorem \ref{thm:ex} provided that $t$ is a constant.
\end{theorem}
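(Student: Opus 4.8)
The plan is to derandomize the probabilistic argument of Theorem~\ref{thm:ex} by the method of conditional expectations, exploiting that for constant $t$ the relevant event depends on only boundedly many of the random sets. Recall that in the proof of Theorem~\ref{thm:ex} we chose $2m$ random $(r+1)$-subsets $X_1,\dots,X_{2m}$ of $[q]$ and bounded the expected number of ``bad'' subsets $S\subseteq[2m]$, $|S|\le t$, for which $|\bigcup_{i\in S}X_i|\le |S|r$; this expectation was shown to be at most $m$. Equivalently, writing $Z=\sum_{S}Y_S$ for the total count of bad subsets, we have $\E[Z]\le m$, and any outcome with $Z\le m$ yields, after deleting one set per bad $S$, the desired $m$ sets satisfying \eqref{condition}.

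First I would reveal the sets $X_1,X_2,\dots,X_{2m}$ one at a time, and after fixing $X_1=B_1,\dots,X_{j}=B_{j}$ I would compute the conditional expectation
\[
\Phi_j \;\eqdef\; \E\!\left[\,Z \;\middle|\; X_1=B_1,\dots,X_j=B_j\right].
\]
The key observation is that $Z=\sum_S Y_S$ splits into two parts: the sum over $S\subseteq[j]$, which is already fully determined by $B_1,\dots,B_j$ and is computed directly (there are at most $\binom{2m}{t}=\mathrm{poly}(m)$ such terms, each a set-union of at most $t$ sets, hence evaluable in polynomial time); and the sum over $S\not\subseteq[j]$, whose conditional probability $\Pr[Y_S=1\mid B_1,\dots,B_j]$ depends only on the at most $|S|\le t$ sets indexed by $S$, of which at most $t$ are unrevealed. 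For each such $S$, this probability is a sum over the (constantly many, since $t$ is constant) ways the revealed sets among $\{B_i:i\in S\cap[j]\}$ can overlap, of elementary expressions in $q,r,t$; it can be evaluated in time $\mathrm{poly}(q,r)$ by a brute-force enumeration over intersection patterns, or simply estimated from above by the same closed-form bound used in the proof of Theorem~\ref{thm:ex} (which already suffices, since we only need $\Phi_j$ to be a computable pessimistic estimator that starts $\le m$ and never increases). Having $\Phi_{j}$, I would choose $B_{j+1}$ to be an $(r+1)$-subset minimizing $\Phi_{j+1}$; by the standard averaging identity $\Phi_j=\E_{B_{j+1}}[\Phi_{j+1}]$, some choice keeps $\Phi_{j+1}\le\Phi_j$, so after all $2m$ steps we reach an explicit family with $Z=\Phi_{2m}\le\Phi_0=\E[Z]\le m$. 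Deleting one set from each of the $\le m$ bad subsets gives the required $m$ sets, and invoking Theorem~\ref{thm:opt} completes the construction.

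The one point requiring care is keeping each conditional-expectation evaluation polynomial-time: a naive minimization over all $\binom{q}{r+1}$ choices of $B_{j+1}$ is not polynomial when $r$ grows, so I would instead reveal each new set \emph{element by element} (or argue that $r$ is a fixed constant, which is the regime of interest since the code's locality is a small constant), reducing each decision to a minimization over the $q$ possible next elements and evaluating $\Phi$ after each single element is fixed. The conditional probability of each surviving $Y_S=1$ remains a function of only $O(t\cdot r)=O(1)$ already-fixed coordinates plus the unrevealed ones, so it is still computable by summing over a bounded number of intersection patterns. Thus the total running time is $\mathrm{poly}(m,q,r)$ with the exponent depending on $t$, which is the claimed polynomial-time deterministic algorithm; the main (and only) obstacle is this bookkeeping to ensure the per-step estimator is both a valid upper bound on the true conditional expectation and efficiently computable, and it is handled exactly as in the classical derandomization of the Lovász-type counting argument.
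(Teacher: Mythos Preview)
Your approach is essentially the paper's: derandomize the probabilistic construction via conditional expectations, revealing each $X_i$ element by element so as to avoid enumerating all $\binom{q}{r+1}$ subsets. One slip to fix: you write that each conditional probability depends on ``$O(t\cdot r)=O(1)$ already-fixed coordinates'' and can be evaluated by summing over ``a bounded number of intersection patterns,'' but the theorem only assumes $t$ constant, not $r$; the paper's computation is explicit here, enumerating the $\binom{(r+1)t}{t}$ choices of repeated positions and then the $q^t$ value assignments, which is polynomial in $q,r$ (not $O(1)$) precisely because $t$ is constant. With that correction your argument matches the paper's; your alternative of using the closed-form upper bound from Theorem~\ref{thm:ex} as a pessimistic estimator would require separate verification that it remains monotone under conditioning, which neither you nor the paper carries out.
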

\begin{proof}
We follow the same notation in Theorem \ref{thm:ex}. Let $X_i=\{x_{i,1},\ldots,x_{i,r+1}\}$ be a random set of size $r+1$.
Our goal is to minimize $E[\sum_{S\subset[2m],|S|\leq t}Y_S]$ by fixing the set $X_i$ one by one.
Since
\begin{eqnarray*}
E\left[\sum_{S\subseteq[2m],|S|\leq t}Y_S\right]&=&\sum_{A\subset [q],|A|=r+1}
E\left[\sum_{S\subseteq[2m],|S|\leq t}Y_S|X_1=A\right]\Pr[X_1=A]\\
&=&\frac{1}{\binom{q}{r+1}}\sum_{A\subset [q],|A|=r+1}E\left[\sum_{S\subseteq[2m],|S|\leq t}Y_S|X_1=A\right],
\end{eqnarray*}
there exists a set $A$ such that  $E\left[\sum_{S\subseteq[2m],|S|\leq t}Y_S|X_1=A\right]\leq E\left[\sum_{S\subseteq[2m],|S|\leq t}Y_S\right]$.
If $r+1$ is a constant, we only need to enumerate all subsets of size $r+1$ in polynomial time.
However, if $r+1$ is not a constant, we enumerate $x_{1,1}\in X_1$ instead of the whole set, i.e.,
we minimize $E\left[\sum_{S\subseteq[2m],|S|\leq t}Y_S|x_{1,1}=a_{1,1}\right]$ for $a_{1,1}\in [q]$.
It remains to show how to compute this expectation.
Given a subset $S\subseteq [2m]$ of size $t$, let us show how to compute $E[Y_S|x_{1,1}=a_{1,1}]$.
Without loss of generality, we assume $S=\{1,\ldots,t\}$.
We list $t(r+1)$ random elements $x_{1,1}=a_{1,1},x_{1,2},\ldots,x_{1,r+1},\ldots,x_{t,1},\ldots,x_{t,r+1}$.
For large enough $q$, it suffices to compute $E[Y_S|x_{1,1}=a_{1,1}]$
by counting the number of sequences where there are exact $t$ repetitions.
There are $\binom{(r+1)t}{t}$ combinations of these $t$ positions. Let $R\subseteq [t]\times [r+1]$ be any set of $t$ positions.
we first remove these $t$ positions from the sequence. The remaining $tr$ positions in the sequence must have distinct elements and there are $\prod_{i=0}^{rt-1}(q-i)$ ways to pick these
$tr$ elements. Now we assign $1,\ldots,rt$ to these $rt$ positions and then determine the rest of sequence.
To obtain our final result, we multiply it by $\prod_{i=0}^{rt-1}(q-i)$.
For each $(i,j)\in R$, we enumerate all possible choices of
$x_{i,j}, (i,j)\in R$ and find out the number of combinations that there are exact $t$ repetitions in the resulting sequence.
There are at most $q^t$ ways to do the enumeration.
Then, we obtain the exact value of $E[Y_S|x_{1,1}=a_{1,1}]$.
Observe that there are at most $\sum_{i=2}^{t}\binom{n}{i}$ subsets $S$. Thus, this expectation can be computed
in polynomial time as $t$ is a constant.
We do it $r+1$ times so as to fix all elements in $X_1$.
Given $A_1,\ldots,A_k$, our goal is to find $X_{k+1}=A_{k+1}$
to minimize the expectation
$$E\left[\sum_{S\subseteq[2m],|S|\leq t}Y_S|X_1=A_1,\ldots,X_k=A_k\right]\leq E\left[\sum_{S\subseteq[2m],|S|\leq t}Y_S\right].$$
It can be done in the same way as $X_1$ is already fixed.
After we fix all these $2m$ sets, we will obtain $A_1,\ldots,A_{2m}$ with the same property as Theorem \ref{thm:ex} claims.
Then, we enumerate all $t$-sized subsets $S\subseteq[q]$ and do the same  as Theorem \ref{thm:ex} does.
The resulting subsets are the output of our algorithm. The number of these subsets is at least $m$.
Since $t$ is constant,
all this operation is done in polynomial time. The proof is completed.
\end{proof}
The following is a direct consequence of Theorem \ref{thm:set} and Theorem \ref{thm:ex}.
\begin{theorem} For  $d\ge 5$, put $t=\left\lfloor\frac{d-1}2\right\rfloor$. If $r\ge d-2$, $(r+1)|n$ and $q$ is sufficiently large, then
there exists  a $q$-ary $[n,k,d]$ optimal locally recoverable code with locality $r$ and $n\geq\frac{q^{1+\frac{1}{t-1}}}{2t^2(r+1)^{1+\frac{2}{t-1}}}$. The parity matrix of this code
has the form of \eqref{eq:Vandermond}.
Moreover, if $d$ is a constant,
there exists a deterministic algorithm running in polynomial time to construct this code.
\end{theorem}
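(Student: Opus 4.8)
The plan is to assemble the two machines already built in the paper. Theorem~\ref{thm:ex} (together with its derandomized version, the theorem immediately preceding the present one) manufactures a family of $(r+1)$-subsets of $\F_q$ obeying the disjointness condition \eqref{condition}, and Theorem~\ref{thm:opt}, which rests on the distance criterion Theorem~\ref{thm:set}, converts any such family into an optimal locally recoverable code whose parity-check matrix has the Vandermonde shape \eqref{eq:Vandermond}. Thus the proof is a short bookkeeping argument that tracks the length of the resulting code.

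Concretely, I would first check the hypotheses are compatible: from $r\ge d-2$ we get $d-1\le r+1$, hence $t=\lfloor(d-1)/2\rfloor\le(r+1)/2<r+1$, and $d\ge 5$ forces $t\ge 2$, so Theorem~\ref{thm:opt} applies; for $d\ge 7$, where $t\ge 3$, Theorem~\ref{thm:ex} applies as well, and, as in Section~4, I would identify $\F_q$ with $[q]$ since only sizes of unions matter. Applying Theorem~\ref{thm:ex}, for $q$ sufficiently large there exist $m=\lceil q^{1+1/(t-1)}/(2t^2(r+1)^{2+2/(t-1)})\rceil$ sets $A_1,\dots,A_m\subseteq\F_q$, each of size $r+1$, with $|\bigcup_{i\in S}A_i|\ge|S|r+1$ for every $S\subseteq[m]$ of size at most $t$, i.e.\ exactly \eqref{condition}. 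Feeding these into Theorem~\ref{thm:opt} with $n:=m(r+1)$ (so $(r+1)\mid n$ automatically) yields a $q$-ary optimal $[n,k,d]$ LRC of locality $r$ with parity-check matrix of the form \eqref{eq:Vandermond}, and the length bound is then pure arithmetic: $n=m(r+1)\ge(r+1)\cdot q^{1+1/(t-1)}/(2t^2(r+1)^{2+2/(t-1)})=q^{1+1/(t-1)}/(2t^2(r+1)^{1+2/(t-1)})$.

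For the algorithmic claim, when $d$ is constant so is $t$, and the preceding theorem provides a deterministic polynomial-time algorithm that outputs the family $A_1,\dots,A_m$; once these are fixed, writing down the matrix \eqref{eq:Vandermond} and (optionally) verifying via Theorem~\ref{thm:set} that every $d-1$ of its columns are independent costs only $\mathrm{poly}(n)$ time, since there are at most $\binom{n}{d-1}=\mathrm{poly}(n)$ such column subsets. Hence the whole construction runs in polynomial time.

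I expect no genuine obstacle: all the substance is in Theorems~\ref{thm:set}, \ref{thm:opt}, \ref{thm:ex} and the derandomization theorem. The only points deserving a sentence of care are (a) the boundary case $t=2$ (i.e.\ $d=5,6$), where Theorem~\ref{thm:ex} as stated does not apply, so one invokes instead the constant-weight-code constructions referenced at the beginning of Section~4 — a partial Steiner packing of pairs from $[q]$ into $(r+1)$-blocks has $\Omega(q^2/(r+1)^2)$ blocks, which comfortably meets the stated bound for $t=2$ — and (b) reading the hypothesis ``$(r+1)\mid n$'' in the statement as selecting the particular length $n=m(r+1)$ produced by the construction (any smaller multiple of $r+1$ is obtained by deleting blocks), rather than an arbitrary prescribed $n$.
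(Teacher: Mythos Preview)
Your proposal is correct and matches the paper's approach exactly: the paper simply states that the theorem is ``a direct consequence of Theorem~\ref{thm:set} and Theorem~\ref{thm:ex}'' and gives no further argument, so your bookkeeping (invoke Theorem~\ref{thm:ex} to get the $m$ sets, plug into Theorem~\ref{thm:opt}, multiply by $r+1$ to get $n$, and cite the derandomization theorem for the algorithmic part) is precisely what is intended. Your extra remarks on the $t=2$ boundary case and on reading $(r+1)\mid n$ as selecting $n=m(r+1)$ are reasonable clarifications that the paper leaves implicit.
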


\section{The connection with extremal graph theory}
To our surprise, it turns out that finding a collection of sets satisfying \eqref{condition} is equivalent to
constructing an $(r+1)$-uniform hypergraph  avoiding the small cycle.
The latter is one of the central problems in extremal graph theory and this problem is extremely difficult.

\begin{lemma}
There exist $m$ sets satisfying \eqref{condition} if and only if there exists an $(r+1)$-hypergraph $H=([q], E)$ with $|E|=m$ that does not have any Berge $\ell$-cycles for all $\ell\leq t$.
\end{lemma}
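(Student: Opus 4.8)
The plan is to establish the equivalence by translating directly between the combinatorial language of set systems and the language of $(r+1)$-uniform hypergraphs, and then showing that the condition \eqref{condition} is exactly the negation of containing a short Berge cycle. The key dictionary is: given sets $A_1,\dots,A_m\subseteq\F_q=[q]$, each of size $r+1$, view them as the hyperedges $e_i=A_i$ of a hypergraph $H=([q],E)$ with $E=\{e_1,\dots,e_m\}$. (Strictly, one should first note that if the $A_i$ satisfy \eqref{condition} then they are pairwise distinct — indeed, as shown in the ``only if'' part of Theorem~\ref{thm:set}, $|A_i\cap A_j|\le 1$ for $i\ne j$ — so $|E|=m$ genuinely holds; conversely a hypergraph with $|E|=m$ gives $m$ distinct, hence certainly $m$, sets.) Under this correspondence I would prove the contrapositive in both directions: the sets fail \eqref{condition} for some $S$ with $|S|\le t$ if and only if $H$ contains a Berge $\ell$-cycle for some $\ell\le t$.

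For the direction ``some $S$ violates \eqref{condition} $\Rightarrow$ Berge cycle'', I would essentially reuse the graph-theoretic argument already carried out in the ``only if'' part of the proof of Theorem~\ref{thm:set}. Take a minimal violating $S$, say $S=\{1,\dots,s\}$ with $s\le t$ and $|\bigcup_{i\in S}A_i|\le sr$. Build the auxiliary graph $G=([s],E_G)$ joining $i$ and $j$ when $A_i\cap A_j\ne\emptyset$; the inclusion–exclusion computation
\[
sr\ \ge\ \Bigl|\bigcup_{i=1}^s A_i\Bigr|\ \ge\ \sum_{i=1}^s|A_i|-\sum_{\{i,j\}\in E_G}1\ =\ s(r+1)-|E_G|
\]
gives $|E_G|\ge s$, so by Lemma~\ref{lm:cycle} $G$ has a cycle $(i_1,\dots,i_\ell)$ with $\ell\le s\le t$. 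For each consecutive pair pick a common vertex $v_j\in A_{i_j}\cap A_{i_{j+1}}$ (indices cyclic); these vertices together with the hyperedges $e_{i_1},\dots,e_{i_\ell}$ witness a Berge $\ell$-cycle, since $\{v_{j-1},v_j\}\subseteq A_{i_j}=e_{i_j}$ for all $j$. One subtlety to check: the $v_j$ should be genuinely distinct vertices so that this is a bona fide Berge cycle of length $\ell$; since $|A_i\cap A_j|\le 1$ along a chordless cycle one can arrange this, or else one extracts an even shorter cycle, staying within length $\le t$.

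For the converse, ``Berge $\ell$-cycle with $\ell\le t$ $\Rightarrow$ violation of \eqref{condition}'', suppose $H$ has a Berge $\ell$-cycle on hyperedges $e_{i_1},\dots,e_{i_\ell}$ with distinct link vertices $v_1,\dots,v_\ell$, $\{v_{j-1},v_j\}\subseteq e_{i_j}$. Let $S=\{i_1,\dots,i_\ell\}$; note $|S|\le\ell\le t$ (and if some hyperedges repeat, $|S|<\ell$, which only helps). I claim $|\bigcup_{i\in S}A_i|\le |S|r$. The point is that the $\ell$ vertices $v_1,\dots,v_\ell$ are each ``double-counted'': $v_j$ lies in both $e_{i_j}$ and $e_{i_{j+1}}$, so when we sum the sizes $\sum_{i\in S}|A_i|=(r+1)|S|\ge(r+1)\ell$ we have overcounted by at least $\ell$ relative to the union (formally: the bipartite incidence structure between $\{v_1,\dots,v_\ell\}$ and the edges of the cycle has $\ge 2\ell$ incidences spread over $\le|S|$ edges, so $\bigl|\bigcup_{i\in S}A_i\bigr|\le\sum_{i\in S}|A_i|-\ell\le(r+1)|S|-\ell\le (r+1)|S|-|S|=r|S|$, using $\ell\ge|S|$... — here one must be slightly careful, since a Berge cycle may have $\ell>|S|$ when hyperedges repeat, but in that degenerate case the repeated hyperedge already forces $|A_i\cap A_j|\ge 2$ type collapse and one directly gets a violation with a small $S$). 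This shows \eqref{condition} fails on $S$.

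The main obstacle I anticipate is \emph{not} the inclusion–exclusion bookkeeping but the careful handling of degeneracies in the definition of a Berge cycle: ensuring the link vertices are distinct, and that repeated or ``nested'' hyperedges along a putative cycle are either excluded or shown to produce an even shorter genuine cycle (of length $\le t$). Once the definition of Berge cycle is pinned down so that a length-$\ell$ Berge cycle always yields an index set $S$ with $|S|\le t$ over which the union is deficient — and conversely a minimal deficient $S$ yields a chordless cycle with distinct links — both implications reduce to the two-line counting estimates above, which mirror exactly the argument already present in the proof of Theorem~\ref{thm:set}.
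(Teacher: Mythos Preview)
Your proposal is correct and follows essentially the same route as the paper: identify $A_i$ with hyperedges, and in each direction use the auxiliary-graph/inclusion--exclusion counting already present in the ``only if'' part of Theorem~\ref{thm:set} to pass between a deficient union and a short Berge cycle. If anything you are more scrupulous than the paper, which simply asserts that the cycle extracted in $G$ yields a Berge cycle and that a Berge $\ell$-cycle immediately gives $|\bigcup A_{i_j}|\le \ell r$; your remarks about distinctness of the link vertices $v_j$ and about repeated hyperedges address genuine edge cases that the paper leaves implicit (and which, as you note, are harmlessly resolved by passing to a shorter cycle or a Berge $2$-cycle when $|A_i\cap A_j|\ge 2$).
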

\begin{proof}
To see the equivalence of these two problems, we define an $(r+1)$-hypergraph as follows:
Let $H=(V,E)$ with $V=[q]$ and $E=\{A_1,\ldots,A_m\}$. It is clear that $H$ is an $(r+1)$-hypergraph.
Assume that there exists $k\leq t$ subsets $A_{i_1},\ldots, A_{i_{k}}$ does not satisfy the condition that
$|\bigcup_{j=1}^k A_{i_j}|\geq rk+1$. The same argument in Theorem \ref{thm:set} implies that there exists a cycle
$(1,2,\ldots ,\ell)$ such that $j\in A_{i_j}\cap A_{i_{j+1}}$.
That means $\{j-1,j\}\subseteq A_{i_j}$ for $j=2,\ldots,\ell$ and $\{1,\ell\}\subseteq A_{i_1}$.
By the definition of Berge cycle, the $(r+1)$-hypergraph $H$ contains this Berge $\ell$-cycle $(1,2,\ldots ,\ell)$.
On the other hand, assume that there exists a Berge $\ell$-cycle in $H$. Denote the $\ell$ edges of this cycle $A_{i_1},\ldots,A_{i_\ell}$.
The results follows since $|A_{i_j}\cap A_{i_{j+1}}|\geq 1$ for $i=1,\ldots,\ell-1$ and $|A_{i_1}\cap A_{i_\ell}|\geq 1$.
\end{proof}

The equivalence of both the problems allow us to make use of known results in this area. Let $\mF$ be a family of $r+1$-hypergraph. Denote by $ex_{r+1}(n,\mF)$ the maximum number
of edges in an $(r+1)$-hypergraph that does not contain any subgraphs in $\mF$.
 Denote by $BC_k$ the set of $k$-cycles.
Let $\mB_k=\{BC_2,\ldots,BC_k\}$.
One upper bound on $ex_{r+1}(n,\mB_t)$ is obtained by reducing this problem to an $m\times n$ bipartite graph with girth
more than $2t$ and apply the result in \cite{H2002}.
\begin{prop}[\cite{V2016}]\label{prop:upper}
$ex_{r+1}(n,\mB_t)$ is upper bounded by
\begin{enumerate}
\item[{\rm (i)}] $\frac{n}{r} (\frac{n}{r+1})^{\frac{2}{t-1}} + \frac{n}{r+1}$ if $t$ is odd,
\item[{\rm (ii)}] $\frac{n}{r(r+1)}n^{\frac{2}{t}}+\frac{n}{r+1}$ if $t$ is even.
\end{enumerate}
\end{prop}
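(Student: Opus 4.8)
The plan is to pass from the hypergraph to its bipartite incidence graph and then invoke Hoory's size--versus--girth inequality from \cite{H2002}; this is exactly the reduction indicated in the sentence preceding the statement, and the Berge-cycle bookkeeping is essentially already present in Theorem~\ref{thm:set} and the preceding lemma. Concretely, let $H=([n],E)$ be an $(r+1)$-uniform hypergraph with $|E|=m$ and with no Berge $\ell$-cycle for any $2\le\ell\le t$; the task is to bound $m$. Form the bipartite graph $G$ with parts $E$ and $[n]$, joining $e\in E$ to $v\in[n]$ precisely when $v\in e$. The $E$-side of $G$ is $(r+1)$-regular, so $G$ has exactly $(r+1)m$ edges, its two parts have sizes $m$ and $n$, and the average degree on the $[n]$-side is $(r+1)m/n$. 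A cycle of length $2\ell$ in $G$ is an alternating sequence $e_1,v_1,\ldots,e_\ell,v_\ell$ with the $e_i$ pairwise distinct, the $v_i$ pairwise distinct, and $\{v_{i-1},v_i\}\subseteq e_i$ read cyclically; this is exactly a Berge $\ell$-cycle of $H$, and conversely every Berge $\ell$-cycle of $H$ produces such a $2\ell$-cycle (or, in the degenerate case of repeated hyperedges, a shorter one). Since $G$ is bipartite it has no odd cycles, so the hypothesis on $H$ is equivalent to $G$ having no cycle of length at most $2t$, i.e. girth greater than $2t$; in particular every ball of radius $t$ in $G$ induces a tree, since a shortest cycle would have length exceeding $2t$ and could not fit inside it.

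Next I would apply the bipartite Moore-type bound of \cite{H2002} to $G$. Running breadth-first search from a vertex, using that the radius-$t$ ball is a tree, and averaging over the starting vertex so as to replace the (individually uncontrolled) $[n]$-side degrees by their average $(r+1)m/n$ --- this averaging being the substance of the ``irregular Moore bound'' --- one obtains an inequality of the shape
\[
n\;\ge\;\Big(r\cdot\tfrac{(r+1)m}{n}\Big)^{\lfloor t/2\rfloor}+(\text{lower-order terms}),
\]
where the parity of $t$ dictates whether the deepest BFS layer lands on the $(r+1)$-regular side or on the $[n]$-side, and hence exactly how the residual factors of $r$ and $r+1$ are distributed. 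Extracting a $\lfloor t/2\rfloor$-th root (which turns $\lfloor t/2\rfloor$ into $2/(t-1)$ for $t$ odd and into $2/t$ for $t$ even), solving for $m$, and absorbing the low-order contributions into an additive term $n/(r+1)$ then yields
\[
m\le\frac nr\Big(\frac n{r+1}\Big)^{\frac2{t-1}}+\frac n{r+1}\quad(t\text{ odd}),\qquad m\le\frac{n}{r(r+1)}\,n^{\frac2t}+\frac n{r+1}\quad(t\text{ even}),
\]
which is the assertion of the proposition.

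The one genuinely delicate point is this last step. Hoory's theorem is naturally phrased through the \emph{average} degrees of the two sides together with a recursively defined Moore-type polynomial; since only the $E$-side of $G$ carries a controlled (in fact constant) degree, one has to run the layer count from the regular side, invoke convexity to pass from the individual $[n]$-side degrees to their average, and keep careful track of the parity of $t$ in order to land the $r$ and $r+1$ factors in the positions shown (there is some slack in the constants, so the stated bound need not be optimal). By contrast, the passage from $H$ to $G$, the dictionary between Berge cycles of $H$ and even cycles of $G$, and the girth-implies-tree observation are all routine.
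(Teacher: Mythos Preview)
Your proposal is correct and follows precisely the approach the paper itself indicates: the paper does not give a proof of this proposition but only the one-line remark that the bound ``is obtained by reducing this problem to an $m\times n$ bipartite graph with girth more than $2t$ and apply the result in \cite{H2002},'' and your write-up is a faithful fleshing-out of exactly that reduction (incidence bipartite graph, Berge $\ell$-cycle $\leftrightarrow$ $2\ell$-cycle, then Hoory's irregular Moore bound with the parity bookkeeping).
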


Since these two problems are equivalent, Proposition \ref{prop:upper} gives an upper bound on the number $m$ of sets $A_i$.
For $t=3,4$, the following two propositions show that this upper bound is asymptotically tight. However, constructing
such hypergraph requires sophisticated knowledge in this area which is beyond the scope of this paper. We
summarize the results as follows.
\begin{prop}[\cite{personal}]
There exists explicit construction of $(r+1)$-hypergraph $H=([q], E)$ with $|E|=q^{2-o(1)}$ that contains no subgraph in $\mB_3$.
\end{prop}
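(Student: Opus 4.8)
The plan is to give an explicit arithmetic construction generalising the Ruzsa--Szemer\'edi $(6,3)$-free hypergraph (which is the case $r+1=3$) and matching the upper bound of Proposition \ref{prop:upper} for $t=3$ up to the $q^{o(1)}$ factor. Note first that avoiding $\mB_3$ means precisely that $H$ is \emph{linear} (no two hyperedges share two vertices, which excludes Berge $2$-cycles) and \emph{Berge-triangle-free}, so these are the two properties I will establish.

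The analytic ingredient I would invoke is a Behrend-type set: for the fixed integer $r$ there is an explicit $A\subseteq\{1,\dots,N\}$ with $|A|=N^{1-o(1)}$ containing no solution to any equation $c_1a_1+c_2a_2+c_3a_3=0$ with $c_1+c_2+c_3=0$ and $0<|c_i|\le r$ other than $a_1=a_2=a_3$. (One takes Behrend's sphere set expressed in a base $d>2r$: a sum of at most $r$ such integers involves no carries, so the equation descends to the identical equation on digit vectors of a fixed Euclidean norm, which admits only the trivial solution.) From this $A$ I would build $H$ as follows: set $M=(r+1)N$, take $r+1$ vertex classes $V_0,\dots,V_r$ with $V_\ell=\{\ell\}\times\{1,\dots,M\}$, and for every pair $(x,a)\in\{1,\dots,N\}\times A$ form the hyperedge $e_{x,a}=\{(\ell,\,x+\ell a):0\le\ell\le r\}$. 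Since $1\le x+\ell a\le(r+1)N=M$ this is a well-defined $(r+1)$-subset of $V=V_0\cup\cdots\cup V_r$, and the map $(x,a)\mapsto e_{x,a}$ is injective (recover $x$ from the $V_0$-vertex and $a$ from the $V_1$-vertex). With $q=|V|=(r+1)^2N$ this produces an explicit $(r+1)$-uniform hypergraph with $|E|=N|A|=N^{2-o(1)}=q^{2-o(1)}$ when $r$ is fixed.

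To rule out a Berge $2$-cycle, I would note that if $e_{x_1,a_1}\ne e_{x_2,a_2}$ shared two distinct vertices $u,u'$, then, since a vertex determines its class and each edge meets each class exactly once, $u$ and $u'$ would be the class-$\alpha$ and class-$\alpha'$ vertices of both edges for some $\alpha\ne\alpha'$, so $x_1+\alpha a_1=x_2+\alpha a_2$ and $x_1+\alpha'a_1=x_2+\alpha'a_2$; subtracting these integer identities forces $a_1=a_2$ and then $x_1=x_2$. For a Berge $3$-cycle, take three distinct edges $e_s=e_{x_s,a_s}$ and distinct vertices $v_1,v_2,v_3$ with $\{v_1,v_2\}\subseteq e_2$, $\{v_2,v_3\}\subseteq e_3$, $\{v_1,v_3\}\subseteq e_1$ (the degenerate cases, where two of the edges or two of the vertices coincide, reduce to a Berge $2$-cycle or are trivial). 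Because an edge contains only one vertex per class, the classes $\alpha,\beta,\gamma$ of $v_1,v_2,v_3$ are pairwise distinct. Reading each $v_i$ off the two edges containing it gives $x_1+\alpha a_1=x_2+\alpha a_2$, $x_2+\beta a_2=x_3+\beta a_3$ and $x_3+\gamma a_3=x_1+\gamma a_1$; summing the three identities yields
$$(\gamma-\alpha)a_1+(\alpha-\beta)a_2+(\beta-\gamma)a_3=0,$$
a translation-invariant relation with nonzero coefficients of absolute value at most $r$, so the choice of $A$ forces $a_1=a_2=a_3$, and then the first identity gives $x_1=x_2$, hence $e_1=e_2$ --- a contradiction.

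Most of this is routine; the step I expect to require the most care is the Berge $3$-cycle analysis, and in particular the observation that the three ``corner'' classes $\alpha,\beta,\gamma$ are forced to be pairwise distinct, since this is exactly what makes the resulting linear relation nontrivial so that the equation-free property of $A$ can be invoked. The only genuinely non-elementary input is the classical existence of a set of size $N^{1-o(1)}$ avoiding all short translation-invariant linear relations with bounded coefficients, together with the minor bookkeeping to present the final hypergraph on a vertex set of exactly the prescribed size $q$ (pad with isolated vertices if needed).
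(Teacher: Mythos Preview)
The paper does not actually supply a proof of this proposition: it is quoted verbatim from a personal communication (the citation \cite{personal}) and left unproved, so there is no ``paper's own proof'' to compare against. Your proposal therefore has to be judged on its own merits, and it is correct. The construction you describe is precisely the natural $(r+1)$-partite generalisation of the Ruzsa--Szemer\'edi hypergraph: vertex classes $V_0,\dots,V_r$, edges $e_{x,a}=\{(\ell,x+\ell a):0\le\ell\le r\}$ indexed by $x\in[N]$ and $a$ in a Behrend-type set $A\subseteq[N]$ avoiding all nontrivial relations $c_1a_1+c_2a_2+c_3a_3=0$ with $c_1+c_2+c_3=0$ and $0<|c_i|\le r$. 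Linearity (no $BC_2$) is immediate since two coordinates of an arithmetic progression determine it, and your reduction of a Berge $3$-cycle to the equation $(\gamma-\alpha)a_1+(\alpha-\beta)a_2+(\beta-\gamma)a_3=0$ with $\alpha,\beta,\gamma\in\{0,\dots,r\}$ pairwise distinct is exactly right; the key observation that the three corner vertices must lie in three distinct classes (because each edge meets each class once) is what makes the coefficients nonzero. One minor quibble: when you justify the no-carry property you should note that since $c_1+c_2+c_3=0$ with $|c_i|\le r$, the positive coefficients sum to at most $r$ (not $2r$), so taking digits bounded by some $m$ and base $d>rm$ suffices; your ``$d>2r$'' is the right spirit but the precise bound depends on the digit range you choose. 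This is almost certainly the construction the cited source had in mind.
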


\begin{prop}[Theorem 23 \cite{V2016}]
There exists explicit construction of $(r+1)$-hypergraph $H=([q], E)$ with $|E|=q^{\frac{3}{2}-o(1)}$ that contains no subgraph in $\mB_4$.
\end{prop}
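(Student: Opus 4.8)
The plan is to restate the assertion, via the equivalence Lemma just proved, as the construction of a highly unbalanced bipartite graph of large girth, and then to invoke an explicit algebraic incidence structure. By that Lemma (with $t=4$, since $\mB_4=\{BC_2,BC_3,BC_4\}$), a family $A_1,\dots,A_m\subseteq[q]$ obeying \eqref{condition} is exactly an $(r+1)$-uniform hypergraph $H=([q],E)$, $|E|=m$, with no Berge $\ell$-cycle for $\ell\le 4$. Passing to the bipartite incidence graph $\Gamma$ of $H$ --- parts $[q]$ and $E$, with $v\sim e$ iff $v\in e$ --- a Berge $\ell$-cycle of $H$ is precisely a $2\ell$-cycle of $\Gamma$, so it suffices to produce an \emph{explicit} bipartite $\Gamma$ with one part $[q]$, the other of size $m=q^{3/2-o(1)}$, every vertex of the second part of degree exactly $r+1$, and no cycle of length $4$, $6$ or $8$ (girth at least $10$).

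First I would record why $3/2$ is the right exponent and why only $q^{3/2-o(1)}$ is attainable by this route. Proposition \ref{prop:upper}(ii) with $t=4$, $n=q$ gives $m\le \frac{q}{r(r+1)}\,q^{1/2}+\frac{q}{r+1}=O(q^{3/2})$ for fixed $r$, so the bound is tight up to the $o(1)$ in the exponent; and a girth-$10$ incidence geometry meeting it exactly would be a thick generalized pentagon, which by the Feit--Higman theorem does not exist. Hence the random deletion of Theorem \ref{thm:ex}, which only reaches $q^{1+1/(t-1)}=q^{4/3}$ here, cannot be pushed to $q^{3/2}$: a genuinely algebraic, near-extremal construction is required.

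For the construction itself I would take an algebraically defined incidence structure over a subfield $\F_{q_0}$, with $q_0$ chosen (as the largest suitable prime power) so that its point set, after padding $[q]$ with at most $q^{o(1)}$ isolated vertices, has size exactly $q$. The edges of $H$ are the point sets of a carefully selected family of low-degree curves --- equivalently, translates of a $B_4$-type Sidon set in a Cayley-type model --- chosen so that (a) any two edges meet in at most one point and (b) no three, and no four, edges form a linear cycle; one then trims each edge down to $r+1$ points. Conditions (a) and (b) each amount to the statement that a fixed system of low-degree polynomial equations over $\F_{q_0}$ has only the trivial solution, verified by a direct B\'ezout/degree count or by the defining property of the Sidon set; the remaining bookkeeping costs only a $q^{o(1)}$ factor in $m$. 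The full details are carried out in \cite{V2016}; we only need the statement.

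The main obstacle is step (b): whereas excluding $4$-cycles is a single quadratic condition --- the condition $|A_i\cap A_j|\le 1$ that also appears in Theorem \ref{thm:set} --- excluding $6$- and $8$-cycles at once is exactly the girth-$\ge 10$ requirement, which is hard precisely because there is no generalized pentagon. It is here that one must commit to a specific algebraic family whose incidence equations provably have no short nontrivial solutions, and it is this rigidity that forces the $q^{o(1)}$ gap from the upper bound of Proposition \ref{prop:upper}(ii).
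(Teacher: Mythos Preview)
The paper does not supply a proof of this proposition at all: it is quoted verbatim as Theorem~23 of \cite{V2016}, with the surrounding remark that ``constructing such hypergraph requires sophisticated knowledge in this area which is beyond the scope of this paper.'' Your proposal ends in exactly the same place --- you write ``The full details are carried out in \cite{V2016}; we only need the statement'' --- so on the level of \emph{proof content} you and the paper agree.

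What you add on top of the bare citation is accurate and useful: the passage to the bipartite incidence graph and the observation that a Berge $\ell$-cycle is a $2\ell$-cycle there are correct; the check that Proposition~\ref{prop:upper}(ii) with $t=4$ gives the matching upper bound $O(q^{3/2})$ is correct; the Feit--Higman remark explaining why no thick generalized pentagon exists, and hence why one must settle for $q^{3/2-o(1)}$ rather than $\Theta(q^{3/2})$, is the right heuristic; and your note that the random argument of Theorem~\ref{thm:ex} only reaches $q^{4/3}$ for $t=4$ correctly explains why a genuinely algebraic construction is needed here. None of this is in the paper, but it is good commentary.

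The only caveat is that your middle paragraph describing ``translates of a $B_4$-type Sidon set'' and ``low-degree curves'' is a plausible-sounding but uncommitted sketch rather than an actual construction; since you immediately defer to \cite{V2016} anyway, this is harmless, but it should not be mistaken for a proof.
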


Determining the exact value of $ex_{r+1}(n,\mB_t)$ for $r\geq 2$ and $t\geq 3$ is extremely difficult.
A major open problem in this area is whether $ex_{r+1}(n,\mB_t)=\Omega(n^{1+\frac{2}{t}})$.
A tighter lower bound for general $t$ can be obtained from $H$-free
random process \cite{BK}. The method in \cite{BK} can also be applied to hypergraph and add a
$\log$ factor above the probabilistic method in Theorem \ref{thm:ex}.
Again this technique is beyond our scope.
\begin{prop}[\cite{personal}]
$ex_{r+1}(n,\mB_t)=\Omega_{r,t}(n(n\log n)^{\frac{1}{t-1}})$.
\end{prop}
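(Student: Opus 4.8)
The plan is to prove the lower bound (the interesting range being $t\ge 3$) by running the random greedy $\mB_t$-free hypergraph process and analysing its evolution in the style of Bohman and Keevash's study of the $H$-free process \cite{BK}. A plain alteration argument as in Theorem \ref{thm:ex} only yields $\Omega_{r,t}\big(n^{1+\frac1{t-1}}\big)$ edges; the extra $(\log n)^{\frac1{t-1}}$ factor is exactly what tracking the process buys. Concretely, set $H_0=([n],\emptyset)$ and, given $H_{i-1}$, call an $(r+1)$-subset $A\subseteq[n]$ \emph{open} if $H_{i-1}+A$ contains no Berge $\ell$-cycle for any $\ell\le t$; choose an open $A$ uniformly at random, put $H_i=H_{i-1}+A$, and halt when no open set remains. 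Since forbidding Berge $2$-cycles already forces $|e\cap e'|\le 1$ for distinct edges, the process automatically produces a linear hypergraph, so the sampled sets are genuinely distinct. Every $H_i$ is by construction an $(r+1)$-hypergraph with no Berge $\ell$-cycle for $\ell\le t$, so it suffices to show that with high probability the process runs for at least $m:=c_{r,t}\,n(n\log n)^{\frac1{t-1}}$ steps, for a small constant $c_{r,t}>0$.

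The core is the differential-equation-method analysis of the number $Q(i)$ of open sets, supported by auxiliary statistics: the degrees $d_{\Gamma}(v)$ in the shadow graph $\Gamma(H_i)$ (obtained by replacing each hyperedge by a clique), and, for each pair $u,w$ and each $\ell\le t-1$, the number of Berge $\ell$-paths of $H_i$ from $u$ to $w$. Heuristically, after $i$ steps $\Gamma(H_i)$ has average degree $\Theta_r(i/n)$, so for a fixed candidate $(r+1)$-set $A$ and a fixed pair $u,w\in A$ the number of Berge $(\ell-1)$-paths from $u$ to $w$ is of order $(i/n)^{\ell-1}/n$; summing over $\ell\le t$ and over the $O(r^2)$ pairs inside $A$, the expected number of ways the next step could close $A$ is $\Theta_{r,t}\big(i^{t-1}/n^{t}\big)$, dominated by the $\ell=t$ term once $i\gg n$. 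Hence $Q(i)$ should follow the trajectory $Q(i)\approx\binom{n}{r+1}\exp\!\big(-\Theta_{r,t}(i^{t-1}/n^{t})\big)$, which stays polynomially large in $n$ for all $i\le m$ once $c_{r,t}$ is small enough (indeed $i^{t-1}/n^{t}=\Theta(\log n)$ precisely at $i\asymp n(n\log n)^{\frac1{t-1}}$). To make this rigorous I would, following \cite{BK}, introduce the bad events that some $d_{\Gamma}(v)$, or some pair's count of short Berge paths, or $Q(i)$ itself, deviates from its predicted value; the tracked quantities have controlled one-step increments, so Freedman/Azuma-type martingale concentration together with a union bound over the $n^{O_{r,t}(1)}$ relevant events shows that, off the bad events, $Q(i)>0$ throughout $i\le m$. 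Thus with high probability $H_m$ is an $(r+1)$-hypergraph on $[n]$ with $m$ edges and no Berge $\ell$-cycle for $\ell\le t$, giving $ex_{r+1}(n,\mB_t)\ge m=\Omega_{r,t}\big(n(n\log n)^{\frac1{t-1}}\big)$.

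The main obstacle is precisely this concentration analysis. As in every $H$-free-process argument, one must choose the right ensemble of tracked variables and bad events and control all their one-step changes and drifts simultaneously and uniformly in $i$; the Berge/hypergraph setting adds the extra bookkeeping of separating genuine Berge subconfigurations from degenerate ones that live inside a single $(r+1)$-edge or revisit vertices, which are lower order but must be bounded explicitly. This is why only a sketch is given here; the complete argument runs the template of \cite{BK} for the finite forbidden family $\mB_t$ of Berge cycles and is carried out in \cite{personal}.
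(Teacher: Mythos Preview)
Your approach is exactly the one the paper points to: it does not prove this proposition at all but cites it as \cite{personal} and remarks that ``the method in \cite{BK} can also be applied to hypergraph and add a $\log$ factor above the probabilistic method in Theorem \ref{thm:ex}. Again this technique is beyond our scope.'' Your sketch of the $\mB_t$-free random greedy process with differential-equation tracking of $Q(i)$ and the auxiliary path-count variables is precisely this Bohman--Keevash template, and your heuristic that the dominant closure rate is $\Theta_{r,t}(i^{t-1}/n^t)$, yielding survival up to $i\asymp n(n\log n)^{1/(t-1)}$, is the right computation; so there is nothing to compare beyond noting that you have filled in what the paper deliberately omitted.
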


Theorem \ref{thm:main} summarizes all above results in the language of codes.


\section*{Acknowledgement}
We sincerely thank Prof. J. Verstra{\"e}te for his linking our condition \eqref{condition} with the problem in extremal graph theory. He also provided us some references for latest results on extremal graph theory. We would also like to express our great gratitude to Profs. V. Guruswami, Q. Xiang and M. Lu for discussions and help.

\bibliographystyle{plain}
\bibliography{LRC}
\end{document}